\newtheorem{theorem}{Theorem}[section]
\newtheorem{lemma}[theorem]{Lemma}
\newcommand {\ie} {\emph{i.e.,~}}
\newcommand {\cf} {\emph{cf.}}
\normalfont\fontsize{12}{17}\sffamily\bfseries}
\normalfont\fontsize{12}{17}\sffamily\bfseries\slshape}
\normalfont\fontsize{12}{17}\sffamily\bfseries\slshape}
\newtheorem{remark}{Remark}
\newcommand{\keywords}[1]{%
  \vspace{-1.5em}
  \hspace{6.2em}
  \noindent\textbf{Keywords: }#1
}
\begin{document}
\title{Strong overlap of deterministic and stochastic dynamics in a
super-diffusive regime}

\author{Muhammad Tayyab \textsuperscript{1} \& Jahanzeb Tariq \textsuperscript{2}}

\address{$^1$ School of Computing Sciences, Pak-Austria Fachhochschule Institute of Applied Sciences and
Technology, Mang, 22621 Haripur, Pakistan} 
\address{$^2$ School of Physics, Engineering and Computer Science, University of Hertfordshire, Hatfield, Hertfordshire AL10 9AB, United Kingdom}

\ead{muhammad.tayyab@fecid.paf-iast.edu.pk}

\begin{abstract}
We consider deterministic dynamics, known as the slicer map (SM), which exhibits normal and anomalous diffusion by varying a single parameter. The statistics of the position moments and the low-order position autocorrelation function (PACF) of the SM closely overlap with those of a stochastic process called the L\'evy-Lorentz gas (LLg), particularly in the normal and strongly superdiffusive anomalous regimes. However, matching low-order statistics alone cannot fully characterize the microscopic dynamics or distinguish underlying process classes.
To demonstrate how these dynamics strongly overlap, we focus on the scaling of higher-order PACF, which provides a more detailed characterization. 
In this paper, we analytically derive the generalized PACF of the SM and explore its scaling forms under different temporal relationships. Specifically, we derive several scalings of the $3$-point PACF by analyzing intriguing relations between three times. We compare these scalings with the power-law tails of the numerically estimated $3$-point PACF of the LLg. This comparison provides a detailed description of the correlation scalings of the SM, demonstrating that the SM shares key features with the LLg. Our findings establish the SM as a deterministic analog of the LLg, enabling efficient prediction of multi-time correlations in superdiffusive systems.
\end{abstract}
\keywords{Anomalous diffusion, data fitting, position correlations, asymptotic match}
\section{Introduction}
To study anomalous transport processes, the transport exponent $\gamma$ can be defined as \cite{klages2008anomalous} 
\begin{equation}\label{eq.r} \gamma=\lim\limits_{t\rightarrow\infty}\frac{\ln\langle\Delta x^{2}_{t}\rangle}{\ln(t)}, \end{equation} 
where $\langle\Delta x^{2}_{t}\rangle$ represents the mean square displacement (MSD) at time $t$. The transport exponent $\gamma$ describes the transport regime of the system. For instance, if $\gamma < 1$, they are called sub-diffusive; when $\gamma=1$, they are known as normal diffusive and super-diffusive if $\gamma > 1$. Therefore, Eq.~\eqref{eq.r} holds $\gamma > 0$, and the generalized diffusion coefficient, denoted by $D_{\gamma}$ yields:
\begin{equation} D_{\gamma}=\lim_{t_\rightarrow\infty}\frac{\langle\Delta x^{2}_{t}\rangle}{t^{\gamma}} \in (0,\infty). \end{equation} 
Eq.~\eqref{eq.r} shows the operation of the transport process works in different systems. For example, systems with linear growth of MSD over time show normal diffusion \cite{morgado2002relation}. Conversely, systems that deviate from the normal transport exhibit anomalous diffusion. This occurs when the MSD grows non-linearly with time, resulting in unbounded variance \cite{klages2008anomalous, zaslavsky2002chaos}. 
It has been observed in many different types of physical systems, including the movement of molecules in living cells \cite{saxton2001anomalous}, disorder in solid-state systems \cite{cheruzel2003structures}, the dynamics of cell membranes \cite{day2009tracking}, the low-dimensional transport of heat systems \cite{lepri2003thermal}, the transport of soil \cite{martin2012physical}, Brownian motion \cite{morters2010brownian}, and macroscopic diffusion models \cite{BRP24}, authors of \cite{GSSPCM18} studied the crossover from anomalous to normal diffusion using truncated power-law noise correlations, with applications to lipid bilayer dynamics. Many studies have investigated the microscopic properties of such dynamics and their applications \cite{klages2008anomalous, SMKM24, jepps2003wall, klages2007microscopic,RMDG24,
froemberg2015asymptotic,MG08,BRP24,
Tthesis18,LPW24,TGPMS24}. Above all, Mu\~noz-Gil et al. \cite{Metal21} reported that machine-learning methods are better than traditional methods for analyzing anomalous diffusion, including estimating exponents, classifying models, and segmenting trajectories.

The study of anomalous transport and investigations of microscopic properties have been carried out using principles from statistical mechanics. For instance, the probability distribution function of displacement (PDF) provides the moments of the transport processes. However, many systems with vanishing Lyapunov exponents have become extremely complex and may lack PDF. 
Characterizing microscopic correlations is crucial to distinguish different transport mechanisms \cite{salari2015simple, giberti2019equivalence, vollmer2021displacement,T24}. These microscopic properties include position moments and correlations, where the moments are described by the power law as
\begin{equation}\label{eq.dis}  
\langle|\Delta x_t|^{j}\rangle \sim t^{\gamma(j)},\end{equation}
where $\langle \cdot \rangle$ denotes the ensemble average of the trajectories, the variable $j$ refers to the order of moments being considered, and $\gamma(j)$ is known as the transport exponent, which characterizes the spectrum of the moments of displacement, which is commonly referred to as the nature of the transport process. For $j=2$, exponent $\gamma(2)=\eta$, Eq.~\eqref{eq.dis} represents the power-law behavior of the MSD. As $\eta$ changes, different transport regimes appear. For example, when $0<\eta<1$, the transport is sub-diffusive and normal when $\eta=1$; when $1<\eta<2$, it is super-diffusive and ballistic when $\eta=2$ \cite{salari2015simple, giberti2019equivalence, vollmer2021displacement, klages2008anomalous}. The case \(\eta = 0\) does not necessarily imply the absence of transport; rather, it may indicate rapid transport, such as a logarithmic growth in MSD over time \cite{salari2015simple}.
A scale-invariant transport process occurs when the anomalous behavior of the MSD, which differs from the Gaussian model with linearly growing variance, extends uniformly to all moments of displacement as a linear function of the moment order \(j\), given by \(\gamma(j) = \kappa j\). If \(\gamma(j)\) varies non-linearly with \(j\), the scale invariance breaks down and the transport is classified as strongly anomalous \cite{vollmer2021displacement, CMMGV99}. The most commonly studied strongly anomalous scenario in the literature involves two distinct regimes, where the exponent of \(t\) is governed by two different linear expressions in terms of moment order
\begin{eqnarray}\label{eq:scale-invrt-momnts}
\gamma(j)=\left\{
\begin{array}{lrl}
\kappa j, & \text{for} & \quad j\leq j_c , \\[2mm]
j-(1-j)j_c, &\text{for} & \quad j>j_c,
\end{array}
\right.
\end{eqnarray}
where $\kappa$ and $j_c$ are given parameters.

Analogously, $j^{\text{th}}$ order PACF scale as the power law, which yields
\begin{equation}\label{eq.corr}
\langle\Delta x_{t_{1}}\Delta x_{t_{2}}\cdots\Delta x_{t_{j}}\rangle \sim t_j^{\gamma^{\prime}(j)}, \qquad t_1\leq t_2 \leq  \cdots t_j\,,
\end{equation}
here $\langle \cdot \rangle$ denotes the ensemble average over the trajectories at different times, and $j$ and $\gamma^{\prime}(j)$ is the order and transport exponent of the PACF, respectively.

In recent years, the generalized central limit theorem (CLT) and non-normalizable densities have been used to explain the dynamics of some anomalous processes \cite{froemberg2015asymptotic,rebenshtok2014non,rebenshtok2014infinite}, and shed light on the relationship between deterministic dynamics and transport processes. Although microscopic properties such as position moments and correlations differentiate these processes, chaotic systems are characterized by rapid correlation decay and linear moment exponents, which are associated with standard diffusion. In contrast, non-chaotic systems often decay much slower than chaotic systems and exhibit non-linear diffusion, also known as anomalous diffusion \cite{klages2008anomalous,zaslavsky2002chaos,
klages2007microscopic,
salari2015simple,SMT12,jepps2006thermodynamics}. However, stochastic processes give rise to both normal and anomalous diffusion, depending on their rate of correlation decay, which raises
various questions \cite{dettmann2000microscopic,cecconi2003origin,klages2008anomalous,zaslavsky2002chaos,klages2007microscopic,salari2015simple,denisov2003dynamical,li2005anomalous,barkai1999stochastic,barkai2000one}. In a specific realm, correlation functions are used to differentiate the transport properties of systems with different microscopic structures, but they yield the same position moments \cite{sokolov2012models,rebenshtok2014non,baule2007fractional,barkai2007multi,zaburdaev2008microscopic}. In the following, we discuss the deterministic and stochastic processes known as the SM and the LLg.

The SM \cite{salari2015simple} is a simple one-dimensional, non-chaotic, and deterministic dynamics with a vanishing Lyapunov exponent, which exhibits all possible diffusive regimes by tuning a single parameter $\alpha$. In the SM, trajectories move ballistically in their initial transit until they enter the stable cycle, \ie the $2$-period cycle, and then start oscillating towards their neighboring cells. Salari et al. \cite{salari2015simple} showed that the position moments of the SM have the same asymptotic scaling as those of the LLg when the parameters are appropriately tuned (\cf~Eq.~\eqref{eq.39}).
This framework allows us to study otherwise intractable stochastic processes using analytically solvable deterministic dynamics. The LLg is a random walk in quenched random environments, where scatterers are randomly distributed according to the L\'evy type probability density along a one-dimensional line with a non-equilibrium initial condition. This type of setup has been studied by several authors \cite{ben2000diffusion, burioni2010levy,burioni2010levyb}. Burioni et. al. \cite{burioni2010levy} calculated the position moments of the LLg by assuming a L\'evy type setup and verified the results with numerical simulations. Recently, Bianchi et al. \cite{BCLL16} presented a thorough mathematical framework to show a CLT, where Zamparo \cite{Z23} studied the large fluctuations and transport features of the LLg.

However, the agreement in position moments alone is insufficient to capture the microscopic dynamics of the system or to distinguish between different underlying classes of stochastic processes.
With this consistency, 
Giberti et al. \cite{giberti2019equivalence} analytically derived the low-order 
PACF of the SM and found that it matches the asymptotic decay of the corresponding PACF in the LLg, even though one system is deterministic and the other is random.
Moreover, Vollmer et. al. \cite{vollmer2021displacement} investigate strong anomalous diffusion through the ``fly-and-die'' (FnD) dynamics, which offers an exact analytical solution for systems with ballistic excursions (or light fronts). This study presents a scaling form for the PACF, revealing a universal behavior across various dynamics such as the SM \cite{salari2015simple}, LLg \cite{barkai1999stochastic,barkai2000one, burioni2010levyb}, Lorentz gas \cite{BS80, D14}, L\'evy walks \cite{froemberg2013time,froemberg2013random}, and polygonal billiards \cite{sanders2006occurrence,jepps2006thermodynamics}. The FnD dynamics scaling form effectively predicts the time-dependent behavior of these systems, achieving robust data collapse. In addition, it addresses corrections to scaling, emphasizing their system-specific nature. Recently, Tayyab \cite{T24} evaluated the correlation function in a single scaling regime (\ie the ratio of times) and demonstrated that many systems exhibit universal behavior, even if they have different microscopic dynamics. However, the results yield the same moments and correlation functions;
hence, these findings show that the dynamics of SM and FnD mimic the fundamental correlation characteristics of the LLg under significantly superdiffusive regimes. This alignment holds under specific parameter conditions but diminishes as the system transitions to normal diffusion.
This suggests that these dynamics mimic the anomalous transport properties of the LLg under certain conditions. 
However, the analytic treatment of higher-order PACFs (beyond $2$-point) in simple deterministic models remains largely undeveloped, and quantitative comparisons with the LLg in multiple time variables have not been fully explored.
Therefore, higher-order PACFs are essential for gaining insights into the transport dynamics of the LLg.

In this paper we go beyond previous results by deriving exact higher-order PACFs for the SM and explicitly comparing them to the LLg. Our key results are as follows:
\begin{itemize}
	\item 
		We derive the generalized ($j$-point) PACF of the SM in closed form, obtaining asymptotic scaling laws as functions of the map parameter $\alpha$. In particular, we obtain explicit analytic expressions for the $3$-point PACF $\phi_\alpha(m_1,m_2,m_3)$ and characterize its power-law scaling under different time-separation regimes.
	\item
		We compute the $3$-point PACF of the LLg numerically under the same time compositions as in the SM, and demonstrate that it exhibits power-law tails with exponents matching those of the SM. By matching the parameters and time scale regimes used in the SM, we ensure a meaningful comparison of the correlation decay between the two models.
		\item
			We find a remarkable agreement between the SM and LLg; in each temporal configuration studied, the LLg numerically estimated $3$-point correlations coincide with the analytic scaling derived for the SM. This strong overlap holds in the superdiffusive regime and confirms that the SM reproduces the key multi-time correlation structure of the LLg.
\end{itemize}
These results establish the SM as a simple deterministic dynamics that captures not only the MSD and $2$-point PACFs of a complex stochastic system, but also its higher-order correlation behavior. In doing so, we extend the state of the art in anomalous transport. Unlike prior work that focused on low-order correlations or universal scaling, our approach yields exact analytic formulas for higher-order PACFs in a model setting. This allows us to test universality in greater detail and to attribute stochastic correlation features to the underlying deterministic dynamics.


The remainder of this paper is organized as follows: Section \ref{sec.1} reviews the dynamics of the SM and its properties. In section \ref{subsec.3}, we derive the generalized (or $j$-point) PACF and some of its possible scaling. Section \ref{sec:3pointcorr_SM} is devoted to the $3$-point PACF and its scaling, which are presented in several lemmas below. Section \ref{sec.2} demonstrates the LLg, which characterizes the properties of the system, and presents the numerical results of the scaling applicability of the $3$-point PACF of the SM. Finally, in Sec. \ref{sec.5}, we conclude with our discussion.
\section{The Dynamics of the Slicer map}\label{sec.1}
Discrete-time deterministic dynamics called, the SM \(S_{\alpha}(x, \mathcal{M})\) was introduced by Salari et al. \cite{salari2015simple} on the unit interval \([0,1]\). The time evolution is given by a map 
\begin{equation*} S_\alpha: [0,1]\times \mathbb{Z} \rightarrow [0,1]\times \mathbb{Z}, \end{equation*} 
defined by 
\begin{subequations}\label{eq:SMdynamics} 
\begin{align}\label{eq.1}
x_{n+1} 
 =
 S_\alpha(x_n) = \left\{
    \begin{array}{rl}
      (x_n,\mathcal{M}-1) & \mbox{ if \quad} 0\leq x_n \le  \ell_{\mathcal{M}} \mbox{ \;or\; } \frac{1}{2} < x_n \leq 1-\ell_{\mathcal{M}},\\[3mm]
      (x_n,\mathcal{M}+1) & \mbox{ if \quad} \ell_{\mathcal{M}} < x_n \leq \frac{1}{2} \mbox{ \;or\; } 1-\ell_{\mathcal{M}} < x_n \leq 1 \, ,
    \end{array}
                \right.
\end{align}
where 
\( x_{n} = \{ x + n \} \), with \( x \) is the fractional part (\ie~\( 0 \leq x < 1 \)) and \( n \in \mathbb{N}_0 \) is a non-negative integer. In each term \( x_{n} \), \( n \) is added to \( x \), and the use of the fractional part function ensures that \( x_{n} \) remains within the interval \([0,1]\). The initial ensemble $x_0$ is chosen uniformly in the interval $[0,1]$.
For all integers $\mathcal{M}$, the family of \textit{slicers} $l_{\mathcal{M}}(\alpha)$ is defined as \begin{equation}\label{eq.len} l_{\mathcal{M}}(\alpha)=\frac{1}{(|\mathcal{M}|+2^{1/\alpha})^{\alpha}}, \quad \text{with}\quad \mathcal{M}\in\mathbb{Z}, \; \alpha>0. \end{equation} 
\end{subequations} 
The dynamics of the SM is explained by Eq.~\eqref{eq:SMdynamics}: when a point \( x_n \) lies within the interval \([1/2, 1]\), the slicers move one step forward to cell \( \mathcal{M}+1 \) until \( x_n > l_{\mathcal{M}} \). At this point, the trajectories become periodic with a period of two, oscillating back and forth between the cells \( \mathcal{M} \) and \( \mathcal{M}+1 \). Similarly, when \( x_n \) is in the interval \([0, 1/2]\), the slicers become periodic with the same period, oscillating between cells \( \mathcal{M}-1 \) and \( \mathcal{M} \). The unit cell of the SM, \([0,1]\), is divided into two halves: \([0, 1/2]\) and \([1/2, 1]\), which represent the negative and positive halves of the chain, respectively. The two halves are symmetric and exhibit the same dynamics. Points within the interval \([1/2, 1]\) never reach cells indexed by \( \mathcal{M}-1 \), and points within the interval \([0, 1/2]\) never reach cells indexed by \( \mathcal{M}+1 \). To analyze the transport properties of the SM, we discuss the position moments in Lemma \ref{lem:SM-p-momnts}, and the PACF in the following sections.
\begin{lemma}\label{lem:SM-p-momnts} 
For $\alpha > 0$, the $j^{th}$ position moments of the SM for the uniformly distributed initial condition $x\in[0,1]$, scale asymptotically as \begin{eqnarray}\label{eq.10} 
\langle |\Delta x_m|^{j}\rangle\sim\left\{ \begin{array}{lll} \frac{2\,j}{j-\alpha}m^{j-\alpha}\,, & \mbox{for} & \hbox{$0<\alpha< j$}, \\ [2mm] 2\,j\ln(m)\,, & \mbox{for} &\hbox{$\alpha=j$},\\ [2mm] \text{const}\,, & \mbox{for} &\hbox{$\alpha>j$}.\\ \end{array} \right. \end{eqnarray} 
\end{lemma}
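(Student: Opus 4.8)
The plan is to reduce the moment computation to a single-particle accounting, exploiting the fact that under the dynamics \eqref{eq:SMdynamics} the continuous coordinate is frozen while only the cell index $\mathcal{M}$ evolves, so that the whole trajectory is fixed by the initial $x$. First I would restrict to the positive half $x\in(1/2,1]$; the negative half is symmetric and supplies the overall factor $2$. For such an $x$ the forward condition at cell $\mathcal{M}$ reads $x>1-\ell_{\mathcal{M}}(\alpha)$, and since $\ell_{\mathcal{M}}(\alpha)$ is strictly decreasing in $|\mathcal{M}|$ by \eqref{eq.len}, the point advances through cells $0,1,2,\dots$ until the first cell at which the condition fails, after which it is trapped in the $2$-cycle. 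I would define $N(x)$ as this maximal cell, characterized by $\ell_{N}\le 1-x<\ell_{N-1}$, which immediately yields the law of $N$ under the uniform initial measure: the set $\{N(x)=n\}$ has length $\ell_{n-1}-\ell_n$, and hence the set $\{N(x)\ge n\}$ has measure $\ell_{n-1}=(n-1+2^{1/\alpha})^{-\alpha}\sim n^{-\alpha}$.

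Next I would express the moment at a fixed large time $m$ by splitting the ensemble according to whether the trajectory has already entered its cycle. Trajectories with $N(x)<m$ are oscillating and sit at cell $\approx N(x)$, whereas those with $N(x)\ge m$ are still ballistic and sit exactly at cell $m$; the latter set has measure $\ell_{m-1}$. This gives
\begin{equation*}
\langle|\Delta x_m|^{j}\rangle \;\simeq\; 2\sum_{n=1}^{m-1} n^{j}\bigl(\ell_{n-1}-\ell_n\bigr)\;+\;2\,m^{j}\,\ell_{m-1},
\end{equation*}
where the sum is the contribution of the trapped trajectories and the second term is the ballistic front. Replacing the oscillation value $N(x)$ by $N(x)-1$ at wrong-parity times alters the result only at subleading order, so it may be dropped for the scaling.

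The core of the argument is then the asymptotics of this expression. Using $\ell_{n-1}-\ell_n\sim \alpha\,n^{-\alpha-1}$ together with the integral comparison for $\sum_{n=1}^{m-1} n^{j-\alpha-1}$, three regimes appear according to the sign of $j-\alpha$: for $\alpha<j$ the sum grows like $m^{j-\alpha}/(j-\alpha)$, for $\alpha=j$ like $\ln m$, and for $\alpha>j$ it converges. Combining with the front term $2m^{j}\ell_{m-1}\sim 2\,m^{j-\alpha}$ reproduces each line of \eqref{eq.10}: in the superdiffusive case the bulk coefficient $2\alpha/(j-\alpha)$ and the front coefficient $2$ add to the stated $2j/(j-\alpha)$, for $\alpha=j$ the front term is $O(1)$ and negligible against $2j\ln m$, and for $\alpha>j$ both contributions stay bounded.

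The step I expect to require the most care is this last asymptotic matching, specifically obtaining the exact prefactor $2j/(j-\alpha)$ rather than merely the exponent $j-\alpha$. This demands that the ballistic front be retained on equal footing with the trapped bulk, since the two are of the same order precisely when $\alpha<j$, and that the summand $\ell_{n-1}-\ell_n$ be handled to leading order with controlled error, \emph{e.g.}\ via Euler--Maclaurin or a direct comparison of the sum with $\int^{m}\alpha\, t^{j-\alpha-1}\diff t$. The offset $2^{1/\alpha}$ inside $\ell_{\mathcal{M}}$ affects only lower-order terms and does not enter the leading coefficient.
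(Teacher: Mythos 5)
Your proposal is correct and is essentially the argument the paper relies on: the paper does not prove Lemma~\ref{lem:SM-p-momnts} itself but defers to \cite{salari2015simple, giberti2019equivalence, vollmer2021displacement}, where exactly this trapped-bulk plus ballistic-front accounting is carried out. Your decomposition is also precisely the machinery the paper deploys in Section~\ref{subsec.3} for the generalized PACF: your sets $\{N(x)=n\}$ and $\{N(x)\ge m\}$ correspond to the $Z$-intervals of Eq.~\eqref{eq:itrvl-divs}, your increment $\ell_{n-1}-\ell_n\sim\alpha\,n^{-\alpha-1}$ is the paper's $\Delta_k(\alpha)$, and your split $\sum_{n<m} n^{j}(\ell_{n-1}-\ell_n)+m^{j}\ell_{m-1}$ is Eq.~\eqref{eq:genrlzd-corr} specialized to equal times. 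Your prefactor bookkeeping $2\alpha/(j-\alpha)+2=2j/(j-\alpha)$ is right, and you correctly flag the one point of genuine care, namely that the ballistic front is of the same order as the trapped bulk exactly when $0<\alpha<j$ and must be retained to recover the stated coefficient.
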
 
\begin{proof} See \cite{salari2015simple, giberti2019equivalence, vollmer2021displacement}. \end{proof}
For the case when $j = 2$, Lemma \ref{lem:SM-p-momnts} provides asymptotic scaling for the MSD, which characterizes all possible diffusive regimes of transport by varying a single parameter, \(\alpha\). Specifically, the SM exhibits normal diffusion when \(\alpha = 1\). In other words, the MSD increases linearly with time between a pair of points in this case, describing the same phenomena as classical Brownian motion. If \( 0 < \alpha < 1 \), then the SM exhibits super-diffusive behavior, \ie the MSD grows even faster than linearly, which implies that the particles spread at a higher rate
than in normal diffusion. For \(1 < \alpha < 2 \), the behavior is sub-diffusive because the MSD growth is not linear. This means that the particles are not spreading as much, possibly because they are trapped or held in place by other effects. In the last case, \(\alpha = 2\), exhibits logarithmic MSD growth, which heuristically suggests that the particles do not spread over time.
\paragraph{\textbf{Physical Interpretation of $\alpha$}:}
Parameter $\alpha$ governs the interplay between the transport efficiency and environmental constraints. For the SM:
\begin{itemize}
    \item \textbf{$0 < \alpha < 1$ (superdiffusion):} 
Characterizes systems with sparse obstacles or persistent motion, analogous to the L\'evy flights observed in intracellular transport \cite{LSBWS17} or fractal media \cite{YKY24}.     
    \item \textbf{$\alpha = 1$ (normal diffusion):} Describes unconstrained motion in homogeneous environments, typical of Brownian particles in simple fluids \cite{LSBWS17,YKY24}.
    \item \textbf{$1 < \alpha < 2$ (subdiffusion):} 
Emerges when transient trapping occurs, as observed in crowded cellular environments \cite{LSBWS17} or viscoelastic materials \cite{CM16}.       
    \item \textbf{$\alpha = 2$ (saturation):} 
Corresponds to strongly impeded dynamics where particle motion becomes restricted, observed in dense colloidal systems \cite{SLX24} or glassy states \cite{WMC22}, where MSD plateaus owing to restricted motion.    
\end{itemize}
This parameter $\alpha$ maps to the scatterer density $\beta$ in the LLg (\cf~Eq.~\eqref{eq.39}) quantifying environmental complexity \cite{LSBWS17,YKY24}.
\subsection{Generalized position auto-correlation function of the Slicer Map}\label{subsec.3}
The generalized (or \( j \)-point) PACF of the SM is denoted as \(\phi_\alpha(m_{1}, m_{2}, \ldots, m_{j})\), this function quantifies the distance traveled by trajectories at specified time indices \( m_1 \leq m_2 \leq \cdots \leq m_j \). It is defined as
\begin{align}
\phi_\alpha(m_{1}, m_{2}, \ldots, m_{j}) := 
\langle \Delta x_{m_{1}} \Delta x_{m_{2}} \cdots \Delta x_{m_{j}} \rangle = \langle (x_{m_{1}} - x_{0})(x_{m_{2}} - x_{0}) \cdots (x_{m_{j}} - x_{0}) \rangle,
\end{align}
where \(\langle \cdot \rangle\) represents the ensemble average over all trajectories, capturing the paths taken by the system over time. From the dynamics of the SM, we introduce the term \( l_{\mathcal{M}}^{+}(\alpha) \), known as the \textit{slicer} which evolve in the positive half of the dynamics. The slicer characterizes the positive portion length parameters of the trajectories, and is defined as follows:
\begin{equation}\label{eq.len2}
l_{\mathcal{M}}^{+}(\alpha) = 1 - l_{\mathcal{M}}(\alpha)=1 - \frac{1}{(|{\mathcal{M}}| + 2^{1/\alpha})^{\alpha}}, \quad \text{for } {\mathcal{M}} \in \mathbb{Z}, \; \alpha > 0.
\end{equation}
This definition constrains \( l_{\mathcal{M}}^{+}(\alpha) \) to the interval \([1/2, 1)\) for each integer \( {\mathcal{M}} \). Consequently, for \( x \in [1/2, 1) \), we employ the condition
\begin{equation}\label{eq.xlim}
l_{{\mathcal{M}}-1}^{+}(\alpha) < x < l_{{\mathcal{M}}}^{+}(\alpha).
\end{equation}
This satisfy
\begin{equation}
\frac{1}{2} = l_{0}^{+}(\alpha) < l_{1}^{+}(\alpha) < \cdots < l_{{\mathcal{M}}-1}^{+}(\alpha) < l_{{\mathcal{M}}}^{+}(\alpha) = 1,\qquad \text{such as}\quad \lim\limits_{{\mathcal{M}}\rightarrow \infty} l_{{\mathcal{M}}}^{+}(\alpha) = 1\,.
\end{equation}
We denote the length of interval
\begin{align*}
\Delta_k (\alpha) = l_{k}^{+}(\alpha) - l_{k-1}^{+}(\alpha)\,,\quad \text{by construction this adds upto} \; 1/2,\; \ie \sum\limits_{k=1}^{\infty} \Delta_k(\alpha) = \frac{1}{2}\,,
\end{align*}
where
\begin{equation}
\Delta_k(\alpha) = \frac{\alpha}{k^{\alpha + 1}}\left(1+O(k^{-1})\right)\,.
\end{equation}
Next, we define the function \(\rho_{\alpha}^{m}(x)\), which represents the maximum distance that trajectories can travel at any point \( x \in [1/2, 1) \) is $k$
\begin{equation}\label{maxrho}
\rho_{\alpha}^{m}(x) = \min\{\rho_{\alpha}^{m}(x), m\} \rightarrow k, \quad \text{for } m \in \mathbb{Z}, \; \alpha > 0, \; x \in [1/2, 1).
\end{equation}
For a generalized function that involves the time indices \( j \), we consider the expression
\[
\min\{\rho_{\alpha}^{m_{1}}(x), m_{1}\}, \min\{\rho_{\alpha}^{m_{2}}(x), m_{2}\}, \ldots, \min\{\rho_{\alpha}^{m_{j}}(x), m_{j}\}.
\]
The average of the correlation function is then obtained by integrating over the interval \((1/2, 1]\). This integral accounts for the maximum distances traveled by the trajectories at these points
\begin{align}\label{eq.27a}
\phi_\alpha(m_{1}, m_{2}, \cdots, m_{j}) = 2 \int_{1/2}^{1} \min\{\rho_{\alpha}^{m_{1}}(x), m_{1}\} \, \min\{\rho_{\alpha}^{m_{2}}(x), m_{2}\} \cdots \min\{\rho_{\alpha}^{m_{j}}(x), m_{j}\} \, dx.
\end{align}
The interval \((1/2, 1]\) is subdivided into \( j+1 \) parts as follows
\[
(1/2,1] = Z_{1}^{m_{1}} \cup Z_{m_{1}+1}^{m_{2}} \cup Z_{m_{2}+1}^{m_{3}} \cup \cdots \cup Z_{m_{j}+1}^{\infty},
\]
where each \( Z \)-region defines the regimes in which the trajectories oscillate. Specifically, these subintervals are defined as
\begin{equation}
\begin{aligned}\label{eq:itrvl-divs}
Z^{m_{1}}_{1} &= \{x\in [1/2,1) : 0<\rho_{\alpha}^{m_{1}}\leq m_{1} \}\Rightarrow \prod_{i=1}^{j}min\{\rho_{\alpha}^{m_{i}}(x),m_{i}\}= \prod_{i=1}^{j}\rho_{\alpha}^{m_{i}},\\
Z_{m_{1}+1}^{m_{2}} &= \{x\in [1/2,1) : m_{1}<\rho_{\alpha}^{m_{2}}\leq m_{2} \}\Rightarrow \prod_{i=1}^{j}min\{\rho_{\alpha}^{m_{i}}(x),m_{i}\}= m_{1}\prod_{i=2}^{j} \rho_{\alpha}^{m_{i}}(x),\\
Z_{m_{2}+1}^{m_{3}}&=\{x\in [1/2,1) : m_{2}<\rho_{\alpha}^{m_{3}}\leq m_{3} \}\Rightarrow \prod_{i=1}^{j}min\{\rho_{\alpha}^{m_{i}}(x),m_{i}\}= m_{1}m_{2} \prod_{i=3}^{j}\rho_{\alpha}^{m_{i}}(x),
\\
\vdots\quad & \quad \vdots\qquad \vdots \qquad \vdots\qquad \quad \vdots \qquad \vdots\qquad \vdots \qquad \vdots\qquad \quad \vdots \qquad \vdots \qquad \vdots \\
Z^{\infty}_{m_{j}}&=\{x\in [1/2,1) : m_{j}<\rho_{\alpha}^{m_{j}} \}\Rightarrow \prod_{i=1}^{j}min\{\rho_{\alpha}^{m_{i}}(x),m_{i}\}= \prod_{i=1}^{j}m_{i}.
\end{aligned}
\end{equation}
In the first sub-interval, \( Z^{m_{1}}_{1} \), all trajectories oscillate within their neighboring cells. In the second sub-interval, \( Z_{m_{1}+1}^{m_{2}} \), the trajectories at time \( m_{1} \) are traveling, whereas the others continue to oscillate within their neighboring cells. This process continues until the final subinterval \( Z^{\infty}_{m_{j}} \), where all trajectories are traveling. 

Thus, Eq.~\eqref{eq.27a} is subdivided into \( j+1 \) parts, according to the regimes defined in Eq.~\eqref{eq:itrvl-divs}, where $\rho_{\alpha}^{m_{i}}$ takes a constant value $k$ and can be expressed as
\begin{subequations}\label{eq:genrlzd-corr}
\begin{align}\label{eq:jpoint-corr-intgrl}
\phi_\alpha(m_{1},m_{2},\cdots,m_{j})\ \sim & 2\int\limits_{Z^{m_{1}}_{1}}\prod_{i=1}^{j}\rho_{\alpha}^{m_{i}}dx
+
2m_{1}\int\limits_{Z_{m_{1}+1}^{m_{2}}}\prod_{i=2}^{j} \rho_{\alpha}^{m_{i}}(x)dx
+
2m_{1}m_{2}\int\limits_{Z_{m_{2}+1}^{m_{3}}}\prod_{i=3}^{j}\rho_{\alpha}^{m_{i}}(x)dx\nonumber\\
&+\cdots + 2 \prod_{i=1}^{j}m_{i}\int\limits_{Z^{\infty}_{m_{j}}}dx.\\
\sim & 2\sum_{k=1}^{m_{1}}k^{j}\Delta_k(\alpha) 
+ 
2m_{1}\sum_{k=m_{1}+1}^{m_{2}}k^{j-1}
\Delta_k(\alpha) 
+ 2m_{1}m_{2} \sum_{k=m_{2}+1}^{m_{3}}k^{j-2}\Delta_k(\alpha)\nonumber \\
&+\cdots+
2\prod_{i=1}^{j}m_{i}\sum_{k=m_{j}+1}^{\infty}\Delta_k(\alpha).
\end{align}
\end{subequations}
Therefore, simple calculations allow us to write the generalized PACF as
\begin{equation}\label{eq.30}
\phi_\alpha(m_{1},m_{2},\cdots,m_{j}) \sim 2  \sum_{l=1}^{j} \prod_{i=0}^{j-l}m_{i} \left[\frac{\alpha\, m^{l-\alpha}_{(j+1)-l}}{(\alpha-l)\left((l-1)-\alpha\right)}\right] - \frac{\alpha}{j-\alpha}\,,
\end{equation}
where $j$ represents the order of the correlation function and $m_{0}=1$. This expression provides 
a new era for computing the PACF for any order. Consequently, to study the peculiar microscopic transport properties in Lemma \ref{lem:jpoint-scaling}, we determine the generalized asymptotic behavior for 
various time relationships in this generalized PACF expression, Eq.~\eqref{eq.30}.
\subsubsection{Asymptotic scaling for generalized position auto correlation function}\label{gen}
The asymptotic scaling of the generalized PACF depends on several parameters, such as the diffusion parameter $\alpha$ and relations between times. Here, we derive the asymptotic scaling for the generalized PACF and summarize it in Lemma \ref{lem:jpoint-scaling}.
\begin{lemma}\label{lem:jpoint-scaling}
Given $\alpha > 0$, the $j$-point PACF $\phi_{\alpha}(m_1, m_2, \dots, m_j)$, defined in Eq.~\eqref{eq.30}, asymptotically scales as follows for $m_1 \leq m_2 \leq \cdots \leq m_j$ and $m_k\rightarrow\infty$ for all non-fixed times in $\{1,2,\cdots,j\}:$
\begin{subequations}
\begin{itemize}
\item $k\in\{0,1,2,\cdots,j-1\}$ denotes the number of fixed initial times (\ie $m_1,\dots,m_k$ held constant while $m_{k+1},\cdots,m_j \rightarrow \infty$)
\item Time lags $\tau_{l-1} \in m_l - m_{l-1}$ satisfy:
\begin{equation}\label{eq.32}
\tau_{l-1} \in 
\begin{cases} 
\sin m_1, & \text{(oscillatory separation)},\\
\ln m_1, & \text{(logarithmic growth)}, \\
m_1^q, \quad 0 < q < 2, & \text{(power-law growth)}, \\
\text{const.}, & \text{(fixed time separation)}.
\end{cases}
\end{equation}
\end{itemize}
The asymptotic scaling is given by:
\begin{equation}\label{eq.31}
\phi_{\alpha}(m_1,m_2,\dots,m_j) \sim 
\begin{cases} 
\frac{2j}{j-\alpha}m_{1}^{j-\alpha}, & 0<\alpha<j,\ q<1, \\[1.5ex]
\frac{2(j-1)}{(j-1)-\alpha}m_{1}^{1+q((j-1)-\alpha)}, & 0<\alpha<j-1,\ q>1, \\[1.5ex]
\frac{2\alpha}{(j-\alpha)(\alpha-(j-1))}m_{1}^{j-\alpha}, & j-1<\alpha<j,\ q>1, \\[2.5ex]
\frac{2(j-k)\prod_{i=1}^k m_i}{(j-k)-\alpha}m_{k+1}^{(j-k)-\alpha}, & 
   \substack{0<\alpha<j-k \\[0.5ex] k \in \{1,\dots,j-1\}}, \\[2.5ex]
\frac{2(j-1-k)\prod_{i=1}^k m_i}{(j-1-k)-\alpha}m_{k+1}^{1+q((j-1-k)-\alpha)}, & 
   \substack{0<\alpha<j-1-k,\ q>1 \\[0.5ex] k \in \{1,\dots,j-2\}}, \\[2.5ex]
\frac{2\alpha(j-k-\alpha)\prod_{i=1}^k m_i}{(j-k-\alpha)(\alpha-(j-1-k))}m_{k+1}^{j-k-\alpha}, & 
   \substack{j-1-k<\alpha<j-k,\ q>1 \\[0.5ex] k \in \{1,\dots,j-2\}}, \\[1.5ex]
j^\prime\ln m_1, & \alpha = j, \\[1ex]
\text{const.}, & \alpha > j,
\end{cases}
\end{equation}
\end{subequations}
where $j^\prime$ depends on $\alpha$ and $j$, and $m_k \to \infty$ for all non-fixed times.
\end{lemma}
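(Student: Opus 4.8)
The plan is to read Eq.~\eqref{eq.30} as a finite sum of $j$ competing power-law terms (plus the additive constant $-\tfrac{\alpha}{j-\alpha}$, which is subleading except in the saturated regime) and, for each prescribed temporal relationship, to isolate the asymptotically dominant term or block of terms as the non-fixed times diverge. Writing the generic summand as
\[
T_l \;=\; 2\,\Big(\prod_{i=0}^{j-l} m_i\Big)\,\frac{\alpha\, m_{(j+1)-l}^{\,l-\alpha}}{(\alpha-l)\big((l-1)-\alpha\big)}, \qquad l=1,\dots,j,
\]
I would first translate the lag law $\tau_{l-1}=m_l-m_{l-1}$ of Eq.~\eqref{eq.32} into the scaling of each $m_i$ in powers of $m_1$, then read off the exponent of $m_1$ carried by each $T_l$, and finally sum the surviving terms.

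The key algebraic tool is a telescoping identity for the per-term coefficients. Partial fractions give
\[
\frac{1}{(\alpha-l)\big((l-1)-\alpha\big)} \;=\; \frac{1}{\alpha-l+1}-\frac{1}{\alpha-l},
\]
so that $\sum_{l=1}^{n}\frac{\alpha}{(\alpha-l)((l-1)-\alpha)} = \alpha\big(\tfrac{1}{\alpha}-\tfrac{1}{\alpha-n}\big)=\tfrac{n}{n-\alpha}$. This collapses any block of equally weighted terms into the clean prefactor $\tfrac{2n}{n-\alpha}$ seen throughout Eq.~\eqref{eq.31}, and I expect to apply it with $n=j$, $n=j-1$, and (for frozen initial times) $n=j-k$ or $n=j-1-k$.

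The case analysis then compares exponents. For slow separations (oscillatory, logarithmic, constant, or power-law with $0<q<1$) every lag is $o(m_1)$, so $m_i\sim m_1$ for all $i$ and $T_l\sim m_1^{\,j-\alpha}$ uniformly in $l$; telescoping with $n=j$ returns the first line $\tfrac{2j}{j-\alpha}m_1^{\,j-\alpha}$, consistent with the single-time moment of Lemma~\ref{lem:SM-p-momnts}. For fast separations ($q>1$) the times split as $m_1\ll m_2\sim\cdots\sim m_j\sim m_1^{q}$, and a short bookkeeping gives $T_l\sim m_1^{\,1+q(j-1-\alpha)}$ for $l=1,\dots,j-1$ but $T_j\sim m_1^{\,j-\alpha}$; comparing these two exponents yields the dichotomy that the block $l\le j-1$ dominates when $\alpha<j-1$ (second line, $n=j-1$), whereas the isolated term $T_j$ dominates when $j-1<\alpha<j$ (third line). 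Freezing $k$ initial times simply multiplies everything by $\prod_{i=1}^k m_i$ and replaces the pair $(j,m_1)$ by $(j-k,m_{k+1})$, reproducing lines four through six.

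The hard part is the marginal and saturated behavior, because Eq.~\eqref{eq.30} carries spurious poles at every integer $\alpha=l$ and cannot be substituted directly at $\alpha=j$ or $\alpha>j$. There I would instead return to the discrete form Eq.~\eqref{eq:jpoint-corr-intgrl} with $\Delta_k(\alpha)=\tfrac{\alpha}{k^{\alpha+1}}\big(1+O(k^{-1})\big)$: at $\alpha=j$ the leading block degenerates to $\alpha\sum_{k=1}^{m_1}k^{-1}\sim\alpha\ln m_1$, producing the logarithmic line with its model-dependent constant $j'$, while for $\alpha>j$ all partial sums converge and $\phi_\alpha$ saturates to a constant. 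The remaining care is purely book-keeping --- checking that the $O(k^{-1})$ correction in $\Delta_k$ and the Euler--Maclaurin error in replacing each sum by its leading power do not perturb the leading exponent, and verifying the cancellation $1+q(j-l-1)+q(l-\alpha)=1+q(j-1-\alpha)$ that forces the first $j-1$ terms to share a common order. This last cancellation, which is precisely what lets the telescoping prefactor emerge, is the crux of the argument.
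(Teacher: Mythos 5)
Your proposal is correct and follows essentially the same route as the paper: a term-by-term dominance analysis of Eq.~\eqref{eq.30} (falling back on the sum form Eq.~\eqref{eq:jpoint-corr-intgrl} for the marginal cases $\alpha=j$ and $\alpha>j$), with the cases split by whether the lags are $o(m_1)$, power-law with $q>1$, or have $k$ frozen initial times. Your telescoping partial-fraction identity $\sum_{l=1}^{n}\frac{\alpha}{(\alpha-l)((l-1)-\alpha)}=\frac{n}{n-\alpha}$ and the exponent comparison $(q-1)(j-1-\alpha)>0$ simply make explicit the algebra behind the prefactors $\frac{2n}{n-\alpha}$ and the dominance dichotomy that the paper's proof asserts without derivation.
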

\begin{proof} 
For $m_{k}\rightarrow\infty$ for each $k\in \{1,2,\cdots, j\}$, and the time lag $\tau_{l-1}$ represented in Eq. \eqref{eq.32}.
All terms in Eq.~\eqref{eq.30} significantly contributes to the power-law behavior when $0<\alpha<j$. Therefore, Eq.~\eqref{eq.30} yields as
\begin{align*}
\phi_\alpha(m_{1},m_{2},\cdots,m_{j})\sim  \frac{2j}{j-\alpha}m_{1}^{j-\alpha}, \quad\;\text{for}\;\;\quad 0<\alpha<j\,,\quad  q<1.
\end{align*}
For $m_{k}\rightarrow\infty$ where $k\in \{1,2,\cdots, j\}$ with a power-law-type time lag $\tau_{l-1}\in m_{1}^{q}$ and $q>1$. Then, all the terms from Eq.~\eqref{eq.30} contribute to the power law behavior as follows: where $0<\alpha<j-1$ all  terms contribute to the power law behavior except the first term, which is given as
\begin{align*}
\phi_\alpha(m_{1},m_{2},\cdots,m_{j})\sim\frac{2(j-1)}{(j-1)-\alpha}m_{1}^{1+q((j-1)-\alpha)}\,,\quad\;\;\text{for}\;\quad\;\;0<\alpha<j-1,\;\;q>1.
\end{align*} 
For $(j-1)<\alpha<j$, the first term in Eq.~\eqref{eq.30} contributes to the power-law behavior, which is given as
\begin{align*}
\phi_\alpha(m_{1},m_{2},\cdots,m_{j})\sim\frac{2\alpha}{(j-\alpha)(\alpha-(j-1))}m_{1}^{j-\alpha}\,,\quad\;\; \text{for}\qquad j-1<\alpha<j\,,\;\quad q>1.
\end{align*}
For the fixed $k$ defined in Lemma \ref{lem:jpoint-scaling},
the $m_{k+1}$ terms in Eq. \eqref{eq.30} contributes to the power law behavior over $0<\alpha<j-k$. From Eq. \eqref{eq.30} we have
\begin{align*}
\phi_\alpha(m_{1},m_{2},\cdots,m_{j})\sim \frac{2(j-k)\prod_{i=1}^{k}m_{i}}{((j-k)-\alpha)}m_{k+1}^{(j-k)-\alpha}&\,,\quad \text{for}\quad\;\; 0<\alpha<j-k.
\end{align*}

\noindent As defined in Lemma~\ref{lem:jpoint-scaling}, $k$ denotes the number of fixed initial times, whereas the remaining times evolve with a power-law type time interval $m_{k+1}^{q}$, where $q > 1$. Therefore, the $m_{k+1}$ terms in Eq.~\eqref{eq.30} contribute to the power-law behavior over the intervals $0 < \alpha < j - 1 - k$ and $j - 1 - k < \alpha < j - k$. From Eq.~\eqref{eq.30}, we have
\begin{align*}
\phi_\alpha(m_{1},m_{2},\cdots,m_{j})\sim \frac{2(j-1-k)\prod_{i=1}^{k}m_{i}}{(j-1-k)-\alpha}m_{k+1}^{1+q((j-1-k)-\alpha)}&\,,\quad \text{for}\quad\;\; 0<\alpha<j-1-k.
\end{align*}
Analogous derivation for the case when $j-1-k<\alpha<j-k$, allow us to write
\begin{align*}
\phi_\alpha(m_{1},m_{2},\cdots,m_{j})\sim \frac{2\alpha(j-k-\alpha)\prod_{i=1}^{k}m_{i}}{(j-k-\alpha)(\alpha-(j-1-k))}m_{k+1}^{j-k-\alpha}&\,,\quad \text{for}\quad\;\; j-1-k<\alpha<j-k,
\end{align*}
where $k$ denotes the count of fixed initial times as defined in Lemma \ref{lem:jpoint-scaling}.

\noindent For $\alpha=j$, the first term in Eq. \eqref{eq:genrlzd-corr} diverges logarithmically as $\sim j^\prime \ln m_{1}$, where $j^\prime$ depends on $\alpha$ and $j$. For $\alpha>j$, all terms in Eq. \eqref{eq:genrlzd-corr}, yields a constant value. This completes the proof of Lemma \ref{lem:jpoint-scaling}.
\end{proof}
\subsection{$3$-point position autocorrelation function}\label{sec:3pointcorr_SM}
The $3$-point PACF can be obtained using Eq. \eqref{eq:genrlzd-corr} by requesting $j=3$, the correlation function for $m_1\leq m_2\leq m_3$ yields 
\begin{subequations}\label{eq:3point-corr-intgrl}
\begin{align}\label{intpart}
&\phi_\alpha(m_{1},m_{2},m_{3})\nonumber\\ 
& \sim 2\int\limits_{Z^{m_{1}}_{1}}\prod_{i=1}^{3}\rho_{\alpha}^{m_{i}}dx
+
2m_{1}\int\limits_{Z_{m_{1}+1}^{m_{2}}}\prod_{i=2}^{3} \rho_{\alpha}^{m_{i}}(x)dx
+
2m_{1}m_{2}\int\limits_{Z_{m_{2}+1}^{m_{3}}}\rho_{\alpha}^{m_{3}}(x)dx
+ 2 \prod_{i=1}^{3}m_{i}\int\limits_{Z^{\infty}_{m_{3}}}dx\,,\\[2mm]
&\sim 2\sum_{k=1}^{m_{1}}k^{3}\Delta_k(\alpha)+2m_{1}\sum_{k=m_{1}+1}^{m_{2}}k^{2}\Delta_k(\alpha)+2\prod_{i=1}^{2}m_{i} \sum_{k=m_{2}+1}^{m_{3}}k\Delta_k(\alpha)+2\prod_{i=1}^{3}m_{i}\sum_{k=m_{3}+1}^{\infty}\Delta_k(\alpha)\label{eq:3point-corr-sum}.
\end{align}
\end{subequations}
Simple calculations on Eq. \eqref{eq:3point-corr-intgrl} by using the motion of the trajectories given in Eq.~\eqref{eq:itrvl-divs} 
yield the
$3$-point PACF in simplified form as
\begin{equation}\label{eq.16}
\phi_\alpha(m_{1},m_{2},m_{3}) \sim\frac{2\alpha \;m_{1}^{3-\alpha}}{(\alpha-3)(2-\alpha)} + \frac{2\alpha\;m_{1}m_{2}^{2-\alpha}}{(\alpha-2)(1-\alpha)}+\frac{2\;m_{1}m_{2}m_{3}^{1-\alpha}}{1-\alpha}
+\frac{2\alpha}{\alpha-3}.
\end{equation}
Now, we define the relationships between the three times $m_1$, $m_2$, and $m_3$ before deriving their asymptotic scaling behavior in the section \ref{sec:corr-scalings}.
\begin{description}
\item[i.]
	$m_{1}$ and $m_{2}$ are fixed as $m_{3} \rightarrow \infty$.
\item[ii.]
	$m_{1}$ is fixed; $m_{2}, m_{3} \rightarrow \infty$, and $m_{3} = m_{2} + \tau$.
\item[iii.]	
	$m_{1}, m_{2}, m_{3} \rightarrow \infty$, $m_{2} = m_{1} + \tau$, and $m_{3} = m_{1} + \tau + \xi$.
\item[iv.]	
	$m_{1}, m_{2}, m_{3} \rightarrow \infty$, $m_{2} = m_{1} + m_{1}^{q}$, and $m_{3} = m_{1} + m_{1}^{q} + m_{1}^{q}$.
\item[v.]	
	$m_{1}, m_{2}, m_{3} \rightarrow \infty$, $m_{2} = m_{1} + \ln(m_{1})$, and $m_{3} = m_{1} + \ln(m_{1}) + \ln(m_{1})$.
\item[vi.]
	$m_{1}, m_{2}, m_{3} \rightarrow \infty$, $m_{2} = m_{1} + \sin(m_{1})$, and $m_{3} = m_{1} + \sin(m_{1}) + \sin(m_{1})$.
\item[vii.]  
    $m_{1}, m_{2}, m_{3} \rightarrow \infty$, $m_{2} = m_{1} + \tau$, and $m_{3} = m_{1} + \tau + m_{1}^{q}$.
\item[viii.]    
    $m_{1}$ is fixed; $m_{2}, m_{3} \rightarrow \infty$, and $m_{3} = m_{2} + m_{2}^{q}$.
\end{description}
Here, $\tau$ and $\xi$ are constants, and $q$ is a positive number.
\subsection{Scalings of the position autocorrelation function}\label{sec:corr-scalings}
We derive several scaling forms of the $3$-point PACF, considering different relationships between the three times \( m_1 \), \( m_2 \), and \( m_3 \) for \( m_1 \leq m_2 \leq m_3 \). We observe distinct power-law scaling for different choices of parameter \(\alpha\) and time relationships.
\subsubsection{Scaling of $\phi(m_{1},m_{2},m_{3})$ with $m_{1},m_{2}$ fixed for $m_{3}\rightarrow\infty$}
The asymptotic behavior of the 3-point PACF \(\phi_\alpha(m_1, m_2, m_3)\) for \(m_1 \leq m_2 \leq m_3\), with \(m_1\) and \(m_2\) fixed as \(m_3 \rightarrow \infty\), is summarized in Lemma \ref{lem:m1m2_fixed}.
\begin{lemma}\label{lem:m1m2_fixed}
For $\alpha>0$, and fixed two times \(m_1\) and 
\(m_2\), with \(m_3 \rightarrow \infty\) the PACF 
\(\phi_\alpha\) asymptotically behaves as
\begin{eqnarray}\label{eq.17}
\phi_\alpha(m_{1},m_{2},m_{3})\sim\left\{
\begin{array}{lll}
\frac{2m_{1}m_{2}}{1-\alpha}m_{3}^{1-\alpha}\,, & 0<\alpha< 1\,, \\[2mm]
2m_{1}m_{2}\ln(m_{3})\,,&\alpha=1\,,\\[2mm]
\text{const}\,,&\alpha>1.	
\end{array}
\right.
\end{eqnarray}
\end{lemma}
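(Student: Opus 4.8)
The plan is to extract the asymptotics directly from the closed-form expression Eq.~\eqref{eq.16}, since with $m_1$ and $m_2$ held fixed the entire dependence on $m_3$ is confined to a single term. First I would observe that three of the four summands in Eq.~\eqref{eq.16}, namely $\tfrac{2\alpha m_1^{3-\alpha}}{(\alpha-3)(2-\alpha)}$, $\tfrac{2\alpha m_1 m_2^{2-\alpha}}{(\alpha-2)(1-\alpha)}$, and $\tfrac{2\alpha}{\alpha-3}$, do not involve $m_3$ and hence contribute only a fixed constant; all of the growth (or decay) as $m_3\to\infty$ comes from the third term $\tfrac{2 m_1 m_2}{1-\alpha}\,m_3^{1-\alpha}$, whose behavior is controlled by the sign of $1-\alpha$.

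For $0<\alpha<1$ the exponent $1-\alpha$ is positive, so $m_3^{1-\alpha}\to\infty$ and this term dominates the bounded constants, giving $\phi_\alpha\sim\tfrac{2m_1m_2}{1-\alpha}m_3^{1-\alpha}$. For $\alpha>1$ the exponent is negative, so $m_3^{1-\alpha}\to 0$; the $m_3$-dependent term vanishes and the remaining three constants sum to a finite limit, yielding $\phi_\alpha\sim\text{const}$. Both cases read off immediately and match the first and third lines of Eq.~\eqref{eq.17}.

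The delicate point — and the only real obstacle — is the threshold $\alpha=1$, where Eq.~\eqref{eq.16} is unusable because the factor $\tfrac{1}{1-\alpha}$ in the second and third terms produces a spurious pole. Here I would bypass the closed form and work directly from the discrete sum Eq.~\eqref{eq:3point-corr-sum}. With $m_1,m_2$ fixed, the first two partial sums are finite constants, and the only term growing with $m_3$ is $2m_1m_2\sum_{k=m_2+1}^{m_3}k\,\Delta_k(\alpha)$. Inserting the asymptotic $\Delta_k(\alpha)\sim\alpha\,k^{-\alpha-1}$ at $\alpha=1$ turns this into a harmonic sum, $2m_1m_2\sum_{k=m_2+1}^{m_3}k^{-1}\sim 2m_1m_2\ln(m_3)$. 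I would also verify that the tail contribution $2m_1m_2m_3\sum_{k>m_3}\Delta_k(1)\sim 2m_1m_2$ stays bounded, hence is subleading to the logarithm, so that $\phi_\alpha\sim 2m_1m_2\ln(m_3)$, recovering the middle line of Eq.~\eqref{eq.17} and completing the proof.
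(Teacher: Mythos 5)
Your proposal is correct and follows essentially the same route as the paper: both read off the $0<\alpha<1$ and $\alpha>1$ cases from the closed form Eq.~\eqref{eq.16}, where only the third term carries the $m_3$-dependence, and both fall back on the sum form Eq.~\eqref{eq:3point-corr-sum} at $\alpha=1$, where $k\,\Delta_k(1)\sim k^{-1}$ produces the harmonic-sum logarithm $2m_1m_2\ln(m_3)$. Your write-up is merely a bit more explicit than the paper's about the spurious pole of $\tfrac{1}{1-\alpha}$ at $\alpha=1$ and the boundedness of the tail term, which the paper asserts without detail.
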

\begin{proof}
The first two terms in Eq. \eqref{eq.16} has no  significant contribution to the power law behavior, only the third term with $m_{3}^{1-\alpha}$ contributes to the asymptotic behavior over $(0<\alpha<1)$. From Eq. \eqref{eq.16} we have
\begin{eqnarray}
\phi_\alpha(m_{1},m_{2},m_{3})\sim \frac{2m_{1}m_{2}}{1-\alpha} m_{3}^{1-\alpha}\,, \qquad \text{for}\quad 0 <\alpha <1. \nonumber
\end{eqnarray}
For $\alpha=1$ the first two terms in Eq. \eqref{eq:3point-corr-intgrl} takes a constant value, whereas the third term diverges logarithmically to $
\sim 2m_{1}m_{2}\ln m_{3}$. For $\alpha>1$, all the terms in Eq. \eqref{eq:3point-corr-intgrl}, yields a constant value. This completes the proof of Lemma \ref{lem:m1m2_fixed}.
\end{proof}
\subsubsection
{Scaling of \(\phi_{\alpha}(m_{1},m_{2},m_{3})\) for fixed \(m_{1}\) as \(m_{2},\,m_{3}\to\infty\) with \(m_{3}=m_{2}+\xi\) (\(\xi>0\) fixed)}
\label{sec:m1fixd_m2_run_SM}
The asymptotic behavior of the $3$-point PACF $\phi_\alpha(m_{1},m_{2},m_{3})$ for $ m_{1}\leq m_{2}\leq m_{3}$, with one fixed time $m_{1}$ and
$m_{2},m_{3}\rightarrow\infty$ having a fixed splitting time $m_{3}-m_{2}=\xi=\text{constant}$, where $\xi>0$. The leading-order asymptotic behavior is summarized in Lemma \ref{lem:m1fixd_t1t2run}.
\begin{lemma}\label{lem:m1fixd_t1t2run}
For $m_{i}\rightarrow\infty$ for each $i\in\{2,3\}$, with $m_1$ fixed and constant time lag $m_{3}-m_{2}=\xi$, PACF $\phi_\alpha(m_{1},m_{2},m_{2})$ asymptotically behaves as
\begin{eqnarray}\label{eq.18}
\phi_\alpha(m_1,m_2,m_{2}+\xi)\sim \left\{
\begin{array}{lll}
\frac{4m_{1}}{2-\alpha}m_{2}^{2-\alpha}\,, &\hbox{$0<\alpha< 2$}\,, \\ [2mm]
4m_{1}\ln(m_{2})\,,&\hbox{$\alpha=2$}\,,\\ [2mm]
\text{const}\,,&\hbox{$\alpha>2$}.	
\end{array}
\right.
\end{eqnarray}
\end{lemma}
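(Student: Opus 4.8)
The plan is to begin from the closed form Eq.~\eqref{eq.16}, impose the constraint $m_3=m_2+\xi$, and isolate the leading behavior of each of its four terms as $m_2\to\infty$ with $m_1$ and $\xi$ fixed. Because $m_1$ is constant, the first term $\propto m_1^{3-\alpha}$ and the additive constant $\tfrac{2\alpha}{\alpha-3}$ are $O(1)$. The third term I would rewrite using $(m_2+\xi)^{1-\alpha}=m_2^{1-\alpha}\bigl(1+O(m_2^{-1})\bigr)$, so that $\tfrac{2m_1 m_2 m_3^{1-\alpha}}{1-\alpha}$ has leading part $\tfrac{2m_1}{1-\alpha}m_2^{2-\alpha}$, carrying the same power $m_2^{2-\alpha}$ as the second term $\tfrac{2\alpha m_1}{(\alpha-2)(1-\alpha)}m_2^{2-\alpha}$; the $O(m_2^{-1})$ correction drops to order $m_2^{1-\alpha}$ and is subleading.

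The first computation is then to add the two $m_2^{2-\alpha}$ coefficients: $\tfrac{2m_1}{1-\alpha}\bigl[\tfrac{\alpha}{\alpha-2}+1\bigr]=\tfrac{2m_1}{1-\alpha}\cdot\tfrac{2(\alpha-1)}{\alpha-2}=\tfrac{4m_1}{2-\alpha}$, using $\tfrac{\alpha-1}{1-\alpha}=-1$. This gives $\phi_\alpha\sim\tfrac{4m_1}{2-\alpha}m_2^{2-\alpha}$ on $0<\alpha<2$, the first branch of Eq.~\eqref{eq.18}.

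The delicate point --- and the main obstacle --- is that Eq.~\eqref{eq.16} carries a spurious $1/(1-\alpha)$ pole in both the second and third terms, even though the stated asymptotic is regular at $\alpha=1$ (the only genuine transition in this regime sits at $\alpha=2$, where $2-\alpha$ changes sign). To settle $\alpha=1$ without a limiting argument I would instead work from the summed form Eq.~\eqref{eq:3point-corr-sum} with $\Delta_k(\alpha)\sim\alpha k^{-(\alpha+1)}$. There the bulk sum obeys $2m_1\sum_{k=m_1+1}^{m_2}k^2\Delta_k\sim\tfrac{2\alpha m_1}{2-\alpha}m_2^{2-\alpha}$ and the tail obeys $2m_1 m_2 m_3\sum_{k>m_3}\Delta_k\sim 2m_1 m_2^{2-\alpha}$, and these combine to $\tfrac{4m_1}{2-\alpha}m_2^{2-\alpha}$ with no $1/(1-\alpha)$ factor, so the first branch holds uniformly across $0<\alpha<2$, $\alpha=1$ included. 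The remaining term, the window sum $2m_1 m_2\sum_{k=m_2+1}^{m_3}k\Delta_k$, runs over only the $\xi$ indices $m_2<k\le m_3$, each of size $\sim\alpha m_2^{-\alpha}$, hence contributes at order $m_2^{1-\alpha}$ and is subleading; confirming this, as with the expansion above, is where the fixedness of $\xi$ enters.

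Finally I would dispatch the two remaining branches from the same summed form. At $\alpha=2$ the bulk sum degenerates to $2m_1\cdot2\sum_{k}k^{-1}\sim 4m_1\ln m_2$, while the tail is $O(1)$ and the window term is $o(1)$, yielding $4m_1\ln m_2$; for $\alpha>2$ the bulk sum converges, both $m_2^{2-\alpha}$ and $m_2^{1-\alpha}$ vanish, and every term is bounded, leaving a constant. As an independent check, this lemma is precisely the $j=3$, $k=1$ specialization of Lemma~\ref{lem:jpoint-scaling} in the fixed-separation regime, whose fourth branch returns $\tfrac{2(j-k)\prod_{i=1}^k m_i}{(j-k)-\alpha}m_{k+1}^{(j-k)-\alpha}=\tfrac{4m_1}{2-\alpha}m_2^{2-\alpha}$.
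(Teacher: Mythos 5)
Your proof is correct, and its backbone is the same as the paper's: the paper likewise reads the $0<\alpha<2$ branch off the closed form Eq.~\eqref{eq.16} (with $m_1$ fixed, the first and fourth terms are $O(1)$ and the second and third combine into the stated power law) and then settles $\alpha=2$ and $\alpha>2$ from the sum representation Eq.~\eqref{eq:3point-corr-intgrl}. What you do beyond the paper is worth noting. First, you actually perform the coefficient algebra $\frac{2m_1}{1-\alpha}\left[\frac{\alpha}{\alpha-2}+1\right]=\frac{4m_1}{2-\alpha}$, which the paper only asserts. Second, and more substantively, you observe that the second and third terms of Eq.~\eqref{eq.16} each carry a factor $1/(1-\alpha)$ and are therefore individually undefined at $\alpha=1$, even though the asserted asymptotics $\frac{4m_1}{2-\alpha}m_2^{2-\alpha}$ is regular there; your re-derivation from the summed form Eq.~\eqref{eq:3point-corr-sum}, where the bulk contribution $\frac{2\alpha m_1}{2-\alpha}m_2^{2-\alpha}$ and the tail contribution $2m_1 m_2^{2-\alpha}$ add to $\frac{4m_1}{2-\alpha}m_2^{2-\alpha}$ with no $(1-\alpha)^{-1}$ factor, shows the first branch holds uniformly on $0<\alpha<2$, including $\alpha=1$. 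The paper's proof is silent on this point, so your argument genuinely closes a small gap rather than merely reproducing it. Your remaining bookkeeping (the window sum over the $\xi$ indices contributing only $O(m_2^{1-\alpha})$, the $\alpha=2$ logarithm, the constancy for $\alpha>2$) and the cross-check against the fourth branch of Lemma~\ref{lem:jpoint-scaling} with $j=3$, $k=1$ are all consistent with the paper.
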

\begin{proof}
For fixed time $m_1$, the first term in Eq. \eqref{eq.16} does not significantly contribute to the power-law behavior, whereas the second and third terms determine the asymptotic behavior for \(0 < \alpha < 2\). Therefore, Eq.~\eqref{eq.16} results in the following
\begin{eqnarray}
\phi_\alpha(m_1,m_{2},m_{3})\sim \frac{4m_{1}}{2-\alpha} m_{2}^{2-\alpha}, && 0<\alpha<2.\nonumber
\end{eqnarray}
For $\alpha=2$ the first term in Eqs. \eqref{eq:3point-corr-intgrl} is constant for a fixed time $m_{1}$ and the last two terms provide constant value, whereas the second term diverges logarithmically to $\sim 4m_{1}\ln m_{2}$. For $\alpha>2$, all the terms in Eq. \eqref{eq:3point-corr-intgrl}, yields a constant value. This completes the proof of Lemma \ref{lem:m1fixd_t1t2run}. 
\end{proof}
\subsubsection{Scaling of \(\phi_{\alpha}(m_{1},m_{2},m_{3})\) as \(m_{1},\,m_{2},\,m_{3}\to\infty\) with \(m_{2}=m_{1}+\tau\) and \(m_{3}=m_{1}+\tau+\xi\), for fixed $\tau$ and $\xi$}
The asymptotic behavior of the $3$-point PACF, \(\phi(m_{1}, m_{2}, m_{3})\), for \(m_{1} \leq m_{2} \leq m_{3}\) with \(m_{i} \rightarrow \infty\) for each \(i \in \{1, 2, 3\}\), with fixed time lags \(m_{2} - m_{1} = \tau \), and \(m_{3} - m_{2} = \xi \), for \(\tau, \xi > 0\), such that \(\tau \leq \xi\) or \(\tau > \xi\), is summarized in the following lemmas: Lemma \ref{lem:univrsl-scaling} and Lemma \ref{lem:running_with_q}.

\begin{figure}[t]%
	\centering
	\subfloat[\centering $\alpha=0.5$]
{\hspace{-0.8cm}{\includegraphics[width=9.5cm]{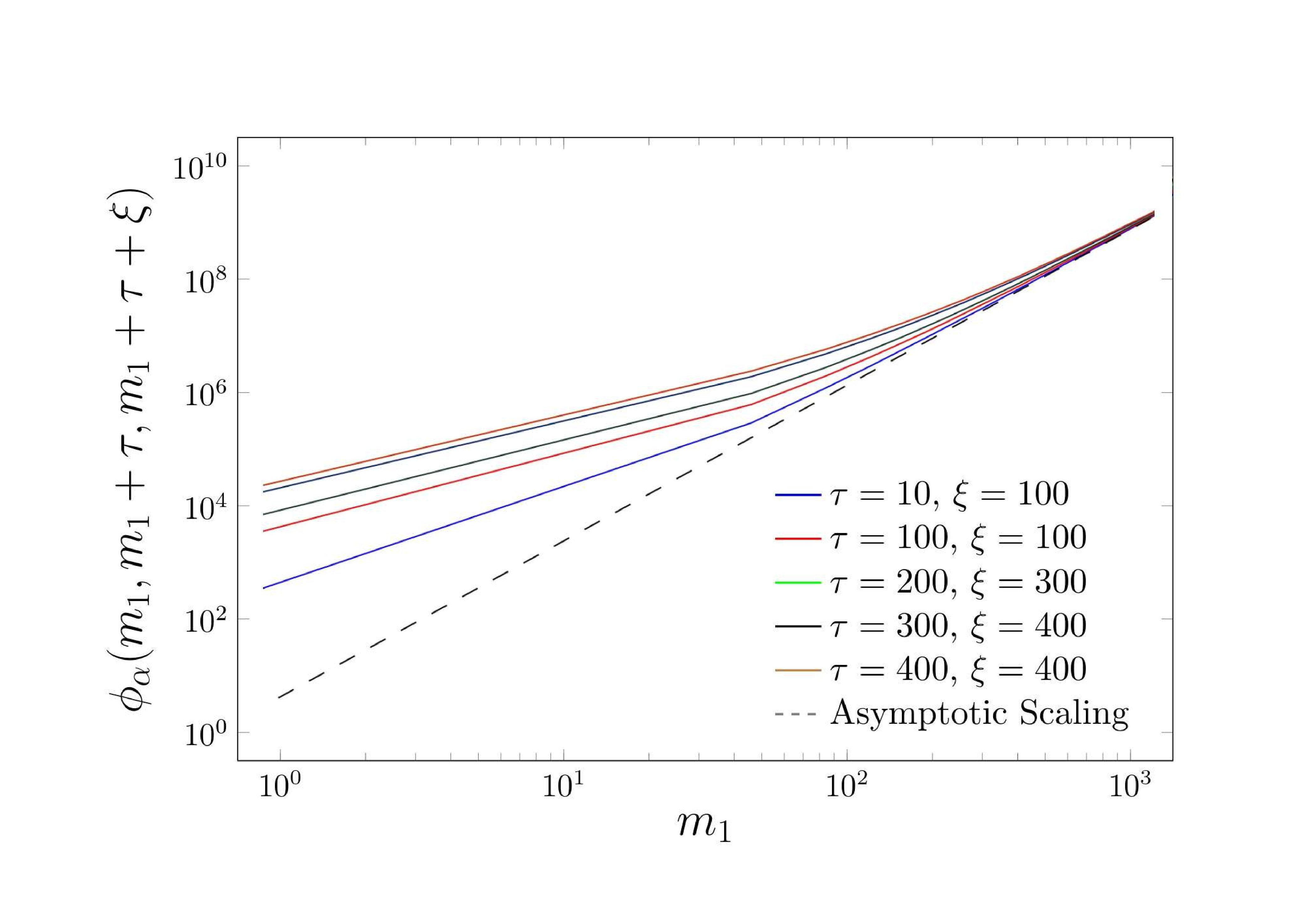}}\label{fig:asympt-match-SM-1}}
	\subfloat[\centering $\alpha=0.2$]{\hspace{-0.8cm}{\includegraphics[width=8.3cm]{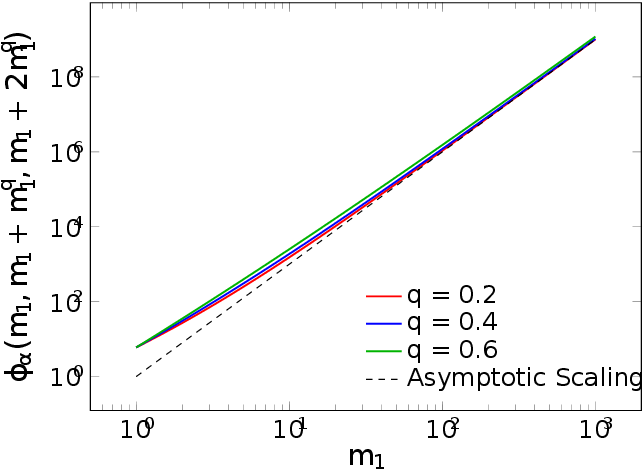}}\label{fig:asympt-match-SM-2}}%
	\caption{These figures show the log-log plots of the $3$-point PACF of the SM, Eq. \eqref{eq:3point-corr-sum}, along with their asymptotic scaling. Figure \ref{fig:asympt-match-SM-1} shows the convergence of PACF $\phi_\alpha(m_1,m_{1}+\tau,m_{1}+\tau+\xi)$ where $\tau$ and $\xi$ are held constant and $m_1\rightarrow\infty$. The PACF is plotted using solid lines in various colors, representing different combinations of constant differences (see legend). Its respective asymptotic scaling, Eq. \eqref{eq.19}, is plotted with the dotted line for $\alpha=0.5$. Similarly, Fig. \ref{fig:asympt-match-SM-2} represents the convergence plot of PACF $\phi_\alpha(m_{1},m_{1}+\tau,m_{1}+\tau+\xi)$ where $\tau=\xi\sim m_1^q, q<1$, and $m_1\rightarrow\infty$. This is plotted by the solid lines in several colors (see legend) to compare with their asymptotic scaling, Eq. \eqref{eq.19}, for $\alpha=0.2$.\label{fig:fig13}}
\end{figure}

\noindent \textbf{Case I}: 
\begin{lemma}\label{lem:univrsl-scaling}
For $m_{i} \rightarrow \infty$ for each $i\in \{1,2,3\}$, the fixed time lags $m_{2}-m_{1}=\tau$ and $m_{3}-m_{2}=\xi$. PACF
$\phi_\alpha(m_{1},m_{1}+\tau,m_{1}+\tau+\xi)$ asymptotically behaves as
\begin{equation}\label{eq.19}
\phi_\alpha(m_{1},m_{1}+\tau,m_{1}+\tau+\xi)\sim\left\{
\begin{array}{lll}
\frac{6}{3-\alpha}m_{1}^{3-\alpha}\,, &\hbox{$0<\alpha< 3$}\,, \\ [2mm]
6\ln(m_{1})\,,&\hbox{$\alpha=3$}\,,\\ [2mm]
\text{const}\,,&\hbox{$\alpha>3$},	
\end{array}
\right.  \; with\;\;
\tau , \xi = \left\{
\begin{array}{lll}
const. \,, & \\ [2mm]
m_1^q\,,& q<1 \,,\\ [2mm]
\ln m_1 \,,&\\ [2mm]
\sin m_1\,.	
\end{array}
\right.
\end{equation}
where $\tau$ and $\xi$ are equal or different respectively.
\end{lemma}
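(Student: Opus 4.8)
The plan is to exploit the single structural feature shared by all four choices of the increments in \eqref{eq.19}: each of $\tau$ and $\xi$ --- whether constant, $m_1^q$ with $q<1$, $\ln m_1$, or $\sin m_1$ --- is $o(m_1)$. Consequently $m_2=m_1+\tau$ and $m_3=m_1+\tau+\xi$ both satisfy $m_2,m_3=m_1(1+o(1))$, so the three observation times become asymptotically indistinguishable. I expect the cleanest route is therefore \emph{not} to track the three power-law terms of the closed form \eqref{eq.16} separately, but to collapse the three-time correlation onto the diagonal third moment and invoke Lemma \ref{lem:SM-p-momnts}.

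Concretely, I would start from the integral representation \eqref{eq.27a} for $j=3$. Writing $k=k(x)$ for the ($m$-independent) oscillation index of the trajectory through $x$, each factor in the integrand equals $\min\{k,m_i\}$, which is nondecreasing in $m_i$. Since $m_1\le m_2\le m_3$ (and for the $\sin m_1$ case one replaces these by $\min_i m_i$ and $\max_i m_i$, both still $m_1(1+o(1))$), termwise monotonicity yields the squeeze
\begin{equation*}
\langle|\Delta x_{m_1}|^3\rangle=2\!\int_{1/2}^1\big(\min\{k,m_1\}\big)^3 dx\;\le\;\phi_\alpha(m_1,m_2,m_3)\;\le\;2\!\int_{1/2}^1\big(\min\{k,m_3\}\big)^3 dx=\langle|\Delta x_{m_3}|^3\rangle,
\end{equation*}
where the two bounding quantities are exactly the $j=3$ position moment of Lemma \ref{lem:SM-p-momnts}, evaluated at $m_1$ and at $m_3=m_1(1+o(1))$.

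The conclusion then drops out case by case. For $0<\alpha<3$ both ends equal $\frac{6}{3-\alpha}m^{3-\alpha}$ with $m\in\{m_1,m_3\}$, and since $m_3^{3-\alpha}=m_1^{3-\alpha}(1+o(1))$ they coincide to leading order, forcing $\phi_\alpha\sim\frac{6}{3-\alpha}m_1^{3-\alpha}$. For $\alpha=3$ both ends are $6\ln m$ and $\ln m_3=\ln m_1+o(1)\sim\ln m_1$, giving $\phi_\alpha\sim 6\ln m_1$. For $\alpha>3$ the third moment saturates to the finite constant $2\sum_{k\ge 1}k^3\Delta_k(\alpha)$, so $\phi_\alpha$ is squeezed between two sequences with the same finite limit and is asymptotically constant.

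As an independent check one may instead substitute $m_2,m_3=m_1(1+o(1))$ directly into \eqref{eq.16}; each of the three power-law terms then becomes proportional to $m_1^{3-\alpha}$, and the only point requiring verification is that their coefficients add correctly, namely
\begin{equation*}
\frac{2\alpha}{(\alpha-3)(2-\alpha)}+\frac{2\alpha}{(\alpha-2)(1-\alpha)}+\frac{2}{1-\alpha}=\frac{6}{3-\alpha},
\end{equation*}
a partial-fraction identity whose numerator over the common denominator $(3-\alpha)(2-\alpha)(1-\alpha)$ reduces to $6(1-\alpha)(2-\alpha)$ and cancels to $\frac{6}{3-\alpha}$. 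The one genuine subtlety on this computational route is that \eqref{eq.16} carries spurious poles at $\alpha=1,2$; the monotone-squeeze argument sidesteps them entirely, which is why I would present it as the primary proof. The main obstacle, then, is conceptual rather than computational: recognizing that the disparate increment forms all reduce to the same $o(m_1)$ condition, so that the three-time correlation is pinned between two single-time moments. The remaining care is only the oscillatory increment $\sin m_1$, where boundedness keeps $m_2,m_3$ within $O(1)$ of $m_1$ and the squeeze (taken between $\min_i m_i$ and $\max_i m_i$) is unaffected.
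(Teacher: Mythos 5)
Your proposal is correct, but it proves the lemma by a genuinely different route than the paper. The paper's own proof works directly from the closed form \eqref{eq.16}: it observes that for $0<\alpha<3$ all three power-law terms contribute at the same order $m_1^{3-\alpha}$ and sums them (the partial-fraction identity you state explicitly is left implicit there), then reads off the logarithmic case $\alpha=3$ and the constant case $\alpha>3$ from \eqref{eq:3point-corr-intgrl}. Your primary argument instead pins $\phi_\alpha(m_1,m_2,m_3)$ between the two single-time third moments $\langle|\Delta x_{\min_i m_i}|^3\rangle$ and $\langle|\Delta x_{\max_i m_i}|^3\rangle$ via termwise monotonicity of $\min\{k,m_i\}$, and then invokes Lemma \ref{lem:SM-p-momnts} together with $m_2,m_3=m_1(1+o(1))$. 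This buys two real advantages over the paper's computation: it is uniform in the four lag types of \eqref{eq.19} (including the oscillatory $\sin m_1$ case, where the ordering $m_1\le m_2\le m_3$ can fail and your replacement by $\min_i m_i$, $\max_i m_i$ is the right fix), and it is insensitive to the spurious poles of \eqref{eq.16} at $\alpha=1,2$, where the individual terms of the paper's asymptotic decomposition degenerate even though the summed scaling $\tfrac{6}{3-\alpha}m_1^{3-\alpha}$ remains valid; the paper's proof glosses over this point. Your secondary verification (substituting $m_2,m_3\sim m_1$ into \eqref{eq.16} and checking that the coefficients sum to $\tfrac{6}{3-\alpha}$) is essentially the paper's argument made explicit, and the identity you quote is correct. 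The one thing the paper's route buys in exchange is immediacy: it requires no auxiliary sandwich and identifies which term of \eqref{eq:3point-corr-intgrl} produces the $6\ln m_1$ divergence at $\alpha=3$, whereas in your argument that constant is inherited from Lemma \ref{lem:SM-p-momnts} rather than located within the correlation sum itself.
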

\begin{proof}
For \( m_{i} \rightarrow \infty \), where \( i \in \{1,2,3\} \), all terms in Eq. \eqref{eq.16} significantly contributes to the power-law behavior for \( 0 < \alpha < 3 \). Hence, Eq.~\eqref{eq.16} results in the following
\begin{align*}
\phi_\alpha(m_{1},m_{1}+\tau,m_{1}+\tau+\xi)&\sim\frac{6}{3-\alpha}m_{1}^{3-\alpha}\,, \quad \;\;\; \text{for}\;\;\; 0<\alpha<3.
\end{align*}
For $\alpha=3$, the first term in Eqs. \eqref{eq:3point-corr-intgrl} diverges logarithmically as 
$\sim 6\ln m_{1}$. For $\alpha>3$, all the terms in Eq. \eqref{eq:3point-corr-intgrl}, yields a constant value. 
Hence, these findings complete the proof of Lemma \ref{lem:univrsl-scaling}.
\end{proof}
\noindent Fig. \ref{fig:fig13} illustrates the asymptotic matching of $3$-point PACF $\phi_\alpha(m_1,m_1+\tau,m_1+\tau+\xi)$, Eq. \eqref{eq:3point-corr-sum}, where $\tau$ and $\xi$ are held constant and $\tau=\xi\sim m_1^q,\,q<1$, with their respective asymptotic scaling, Eq. \eqref{eq.19} for $\alpha=0.5$ and $0.2$. The solid colored lines provide a faithful description of asymptotic scaling in the last few decades; hence, numerical evaluation of the definition confirms the power law behavior.

\noindent \textbf{Case II}:
\begin{lemma}\label{lem:running_with_q}
For \(\alpha > 0\) and \( m_i \to \infty \) for each \( i \in \{1, 2, 3\} \), with power-law-type time lag \( \tau = m_2 - m_1 \) and \(\xi =  m_3 - m_2  \), where \( \tau = \xi \sim m_1^q \) and \( q > 1 \), the PACF \( \phi_\alpha \) asymptotically behaves as
\begin{eqnarray}\label{eq.21}
\phi_\alpha(m_{1},m_{1}+m_{1}^{q},m_{1}+m_{1}^{q}+m_{1}^{q})\sim\left\{
\begin{array}{llll}
\frac{4}{2-\alpha}m_{1}^{1+q(2-\alpha)}\,, &\hbox{$0<\alpha< 2$}\,, \\[2mm]
4(q-1)m_{1}\ln(m_{1})\,,&\hbox{$\alpha=2$}\,,\\[2mm]
\frac{2\alpha}{(3-\alpha)(\alpha-2)}m_{1}^{3-\alpha}\,,&\hbox{$2<\alpha<3$},\\	[2mm]
6\ln m_{1}\,, &\hbox{$\alpha=3$}\,,\\[2mm]
\text{const}\,, &\hbox{$\alpha>3$}.
\end{array}
\right.
\end{eqnarray}
\end{lemma}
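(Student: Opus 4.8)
The plan is to work directly from the closed-form expression Eq.~\eqref{eq.16}, substituting the time relationships $m_2 = m_1 + m_1^q$ and $m_3 = m_1 + 2m_1^q$ with $q > 1$, and then extracting the leading power of $m_1$ from each of its four terms. The crucial preliminary observation is that for $q>1$ the increment $m_1^q$ dominates $m_1$, so that $m_2 \sim m_1^q$ and $m_3 \sim m_1^q$ to leading order. Feeding this into Eq.~\eqref{eq.16}, the first term scales as $m_1^{3-\alpha}$, while both the second term (through $m_1\,m_2^{2-\alpha}$) and the third term (through $m_1\,m_2\,m_3^{1-\alpha}$) scale identically as $m_1^{1+q(2-\alpha)}$, and the fourth term stays bounded.

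The heart of the argument is the comparison of the two competing exponents $3-\alpha$ and $1+q(2-\alpha)$, whose difference factorizes as $(2-\alpha)(1-q)$. Since $q>1$ forces $1-q<0$, the sign of this difference is governed entirely by the sign of $2-\alpha$. For $0<\alpha<2$ the second and third terms dominate, and I would then add their prefactors $\tfrac{2\alpha}{(\alpha-2)(1-\alpha)}$ and $\tfrac{2}{1-\alpha}$; a short simplification using the identity $\tfrac{\alpha-1}{1-\alpha}=-1$ collapses the sum to $\tfrac{4}{2-\alpha}$, producing the claimed leading term $\tfrac{4}{2-\alpha}\,m_1^{1+q(2-\alpha)}$. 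For $2<\alpha<3$ the first term wins, and since $(\alpha-3)(2-\alpha)=(3-\alpha)(\alpha-2)$ its coefficient is exactly $\tfrac{2\alpha}{(3-\alpha)(\alpha-2)}$, giving the stated $m_1^{3-\alpha}$ scaling. For $\alpha>3$ every relevant exponent is negative and the constant fourth term dominates, yielding a finite limit.

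The main obstacle, and the part requiring the most care, lies in the two boundary values $\alpha=2$ and $\alpha=3$, where the denominators in Eq.~\eqref{eq.16} vanish and the closed form is no longer directly usable. Here I would fall back on the sum representation Eq.~\eqref{eq:3point-corr-sum} together with $\Delta_k(\alpha)\sim \alpha\,k^{-(\alpha+1)}$. At $\alpha=3$ the first sum reduces to the harmonic sum $6\sum_{k=1}^{m_1} k^{-1}\sim 6\ln m_1$, while the remaining three sums are checked to be at most constant (using $m_2,m_3\sim m_1^q$), giving $\phi_\alpha\sim 6\ln m_1$. At $\alpha=2$ the second sum becomes $4m_1\sum_{k=m_1+1}^{m_2}k^{-1}\sim 4m_1\ln(m_2/m_1)$, and the decisive point is that $m_2/m_1 = 1+m_1^{q-1}$, so that $\ln(m_2/m_1)\sim(q-1)\ln m_1$; this is precisely where the factor $q-1$ and the requirement $q>1$ enter. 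After verifying that the other three sums contribute only $O(m_1)$, the logarithmic term dominates and yields $4(q-1)\,m_1\ln m_1$, completing the lemma.
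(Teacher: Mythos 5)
Your proposal is correct and takes essentially the same route as the paper's own proof: extract the dominant terms of the closed form Eq.~\eqref{eq.16} on each open $\alpha$-window (second and third terms for $0<\alpha<2$, first term for $2<\alpha<3$, the constant for $\alpha>3$), and fall back on the sum representation Eq.~\eqref{eq:3point-corr-sum} for the logarithmic boundary cases $\alpha=2$ and $\alpha=3$. If anything, you are more explicit than the paper, which merely asserts the combined prefactor $\tfrac{4}{2-\alpha}$ and the factor $(q-1)$, whereas you derive both (via the exponent comparison $(3-\alpha)-(1+q(2-\alpha))=(2-\alpha)(1-q)$ and via $\ln(m_2/m_1)\sim(q-1)\ln m_1$).
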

\begin{proof}
Given \( m_{i} \to \infty \) for each \( i \in \{1, 2, 3\} \), all terms in Eq.~\eqref{eq.16}  significantly contributes to the power-law behavior. When \( 0 < \alpha < 2 \), the second and third terms participate in Eq.~\eqref{eq.16}. By rearranging the equation with a higher order, we obtain
\begin{equation*}
\phi_\alpha(m_{1},m_{1}+m_{1}^{q},m_{1}+m_{1}^{q}+m_{1}^{q})\sim\frac{4}{2-\alpha}m_{1}^{1+q(2-\alpha)},\qquad \text{for}\quad 0<\alpha<2.
\end{equation*}
An analogous derivation for $2<\alpha<3$ which yields
\begin{eqnarray*}
\phi_\alpha(m_{1},m_{1}+m_{1}^{q},m_{1}+m_{1}^{q}+m_{1}^{q}) \sim\frac{2\alpha}{(3-\alpha)(\alpha-2)}m_{1}^{3-\alpha},\qquad \text{for}\quad 2<\alpha<3.
\end{eqnarray*} 
For $\alpha=2$, the second term in Eqs. \eqref{eq:3point-corr-intgrl} diverges logarithmically to $\sim 4(q-1)m_{1}\ln m_{1}$. For $\alpha=3$, the first term in Eq. \eqref{eq:3point-corr-intgrl} diverges logarithmically as $\sim 6\ln m_{1}$. For $\alpha>3$, all terms in Eq. \eqref{eq:3point-corr-intgrl}, yields a constant value. This completes the proof of Lemma \ref{lem:running_with_q}.
\end{proof}

\subsubsection{Scaling of \(\phi_\alpha(m_{1},m_{2},m_{3})\) in the limit \(m_{1}\to\infty\) with \(m_{2}=m_{1}+\tau\), \(m_{3}=m_{1}+\tau+m_{1}^{q}\) (\(q>1\))}
The asymptotic behavior of the $3$-point PACF $\phi_\alpha(m_{1}, m_{2}, m_{3})$ for $m_{i} \rightarrow \infty$ with $i \in \{1, 2, 3\}$ and for $m_{1} \leq m_{2} \leq m_{3}$, with time lags $m_{2} = m_{1} + \tau$ and $m_{3} = m_{1} + \tau + m_{1}^{q}$, where $q > 1$, is summarized in Lemma \ref{lem:scaling_q_large}.
\begin{lemma}\label{lem:scaling_q_large}
For $m_{i} \rightarrow \infty$ with $i \in \{1, 2, 3\}$ and time lags $m_{2} = m_{1} + \tau$ and $m_{3} = m_{1} + \tau + m_{1}^{q}$, the PACF $\phi_\alpha(m_{1}, m_{1} + \tau, m_{1} + \tau + m_{1}^{q})$ asymptotically behaves as
\begin{eqnarray}\label{eq.26}
\phi_\alpha(m_{1},m_{1}+\tau, m_{1} + \tau + m_{1}^{q})\sim\left\{
\begin{array}{llll}
\frac{2}{1-\alpha}m_{1}^{2+q(1-\alpha)}\,, &\hbox{$0<\alpha< 1$}\,, \\[2mm]
2\alpha(q-1)m_{1}^{2}\ln(m_{1})\,,&\hbox{$\alpha=1$}\,,\\[2mm]
\frac{4\alpha}{(3-\alpha)(\alpha-1)}m_{1}^{3-\alpha}\,,&\hbox{$1<\alpha<3$},\\	[2mm]
2\alpha m_{1}\log m_{1}, &\hbox{$\alpha=3$}\,,
\\[2mm]
\text{const.}\,, &\hbox{$\alpha>3$}.
\end{array}
\right.
\end{eqnarray}
\end{lemma}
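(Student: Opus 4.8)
The plan is to start from the closed-form expression for the $3$-point PACF in Eq.~\eqref{eq.16} and substitute the prescribed time asymptotics, exactly as in the proof of Lemma~\ref{lem:running_with_q}. For $m_1 \to \infty$ with $\tau$ fixed we have $m_2 = m_1 + \tau \sim m_1$, whereas $m_3 = m_1 + \tau + m_1^q \sim m_1^q$ because $q>1$. Feeding these into the four terms of Eq.~\eqref{eq.16} turns them into pure powers of $m_1$: the first two terms both scale as $m_1^{3-\alpha}$, the third term becomes $\frac{2}{1-\alpha}\, m_1\cdot m_1\cdot (m_1^q)^{1-\alpha} = \frac{2}{1-\alpha}\, m_1^{2+q(1-\alpha)}$, and the last term stays the constant $\frac{2\alpha}{\alpha-3}$. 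The proof then reduces to comparing the two competing exponents $3-\alpha$ (first two terms) and $2+q(1-\alpha)$ (third term) as functions of $\alpha$.

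First I would locate the crossover. Setting $3-\alpha = 2+q(1-\alpha)$ and using $q>1$ gives $\alpha=1$, and one checks $2+q(1-\alpha) > 3-\alpha \iff \alpha<1$ when $q>1$. Hence for $0<\alpha<1$ the third term wins and yields $\frac{2}{1-\alpha}m_1^{2+q(1-\alpha)}$, which is the first line of Eq.~\eqref{eq.26}. For $1<\alpha<3$ the first two terms dominate, and I would add their coefficients. The bracket $\frac{1}{(\alpha-3)(2-\alpha)} + \frac{1}{(\alpha-2)(1-\alpha)}$ rewrites as $\frac{1}{\alpha-2}\!\left(\frac{1}{3-\alpha}-\frac{1}{\alpha-1}\right) = \frac{1}{\alpha-2}\cdot\frac{2(\alpha-2)}{(3-\alpha)(\alpha-1)} = \frac{2}{(3-\alpha)(\alpha-1)}$, so that the two terms combine to $\frac{4\alpha}{(3-\alpha)(\alpha-1)}m_1^{3-\alpha}$, the third line of Eq.~\eqref{eq.26}. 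It is worth stressing that each term carries a spurious pole at $\alpha=2$, but the factor $(\alpha-2)$ cancels in the sum, so the combined coefficient is regular throughout $(1,3)$ and only the genuine endpoints $\alpha=1,3$ remain singular.

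The endpoint (critical) values $\alpha=1$ and $\alpha=3$ are precisely where Eq.~\eqref{eq.16} is blind to the logarithms, so there I would discard the closed form and work directly with the summatory representation Eq.~\eqref{eq:3point-corr-sum}, using $\Delta_k(\alpha)\sim \alpha/k^{\alpha+1}$. At $\alpha=1$ the third sum becomes harmonic, $2m_1 m_2\sum_{k=m_2+1}^{m_3} k^{-1}\sim 2m_1 m_2\ln(m_3/m_2)$, and since $m_3/m_2\sim m_1^{q-1}$ this contributes $2(q-1)m_1^2\ln m_1$; the first two sums grow only like $m_1^2$ and $m_1$ and are subleading, pinning down the second line $2\alpha(q-1)m_1^2\ln m_1$. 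At $\alpha=3$ the first sum $2\sum_{k=1}^{m_1}k^3\Delta_k(3)\sim 6\sum_{k=1}^{m_1}k^{-1}$ is the surviving growing contribution (the remaining sums saturate to constants), so the harmonic logarithm from this term is what replaces the power law and furnishes the $\alpha=3$ line; extracting its exact prefactor is the delicate step here, and must be done from the sums rather than from the divergent closed-form coefficient. Finally, for $\alpha>3$ every term of Eq.~\eqref{eq.16} is bounded and $\phi_\alpha$ tends to a constant.

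I expect the main obstacle to be the bookkeeping at the two critical exponents: the closed form develops poles exactly where a logarithm supersedes the power law, so one must return to Eq.~\eqref{eq:3point-corr-sum} and read off the correct harmonic-sum prefactor, including the $\ln(m_3/m_2)=(q-1)\ln m_1$ reduction at $\alpha=1$, while simultaneously verifying that the apparent singularity at the interior point $\alpha=2$ is merely an artifact of splitting Eq.~\eqref{eq.16} into separate terms and genuinely cancels in their sum.
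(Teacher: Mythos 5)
Your proposal is correct and follows essentially the same route as the paper: substitute $m_2\sim m_1$, $m_3\sim m_1^{q}$ into the closed form Eq.~\eqref{eq.16}, compare the competing exponents $3-\alpha$ and $2+q(1-\alpha)$, and fall back on the sum representation Eq.~\eqref{eq:3point-corr-sum} at the critical values of $\alpha$. Your execution is, however, more careful than the paper's own proof in three places, and this is worth recording. First, for $1<\alpha<3$ the paper asserts that only the ``first term'' of Eq.~\eqref{eq.16} contributes, yet that term alone carries the coefficient $\frac{2\alpha}{(\alpha-3)(2-\alpha)}$; the stated prefactor $\frac{4\alpha}{(3-\alpha)(\alpha-1)}$ only emerges after adding the first \emph{two} terms, which is exactly the partial-fraction cancellation (including the removable pole at $\alpha=2$) that you carry out, and which a direct evaluation of the sums at $\alpha=2$ (giving $8m_1$) confirms. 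Second, the paper attributes the $\alpha=1$ and $\alpha=3$ logarithms to the ``first and second terms'' of Eq.~\eqref{eq.16}, whereas your attribution --- the third sum turning harmonic at $\alpha=1$ with $\ln(m_3/m_2)\sim(q-1)\ln m_1$, and the first sum turning harmonic at $\alpha=3$ --- is the correct one. Third, your $\alpha=3$ computation gives $6\ln m_1$, which agrees with the paper's own proof text but \emph{not} with the displayed line $2\alpha m_1\log m_1$ of Eq.~\eqref{eq.26}; the extra factor of $m_1$ there is unsupported by either derivation and appears to be a typographical error in the lemma statement, so the discrepancy reflects an inconsistency in the paper rather than a gap in your argument.
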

\begin{proof}
For $m_{i} \rightarrow \infty$ with $i \in \{1, 2, 3\}$, all terms in Eq. \eqref{eq.16} significantly contributes to the power-law behavior. When $0 < \alpha < 1$, the second and third terms participate in Eq.~\eqref{eq.16}. By rearranging the equation, we obtain
\begin{equation*}
\phi_\alpha(m_{1}, m_{1}+\tau, m_{1}+\tau+m_{1}^{q})\sim\frac{2}{1-\alpha}m_{1}^{2+q(1-\alpha)},\qquad \text{for}\quad 0<\alpha<1.
\end{equation*}
For $1<\alpha<3$ first term involves from Eq.~\eqref{eq.16}
\begin{eqnarray}
\phi_\alpha(m_{1}, m_{1}+\tau, m_{1}+\tau+m_{1}^{q})\sim\frac{4\alpha}{(3-\alpha)(\alpha-1)}m_{1}^{3-\alpha},\qquad \text{for}\quad 1<\alpha<3.\nonumber
\end{eqnarray} 
For $\alpha = 1$ and $\alpha = 3$, the first and second terms in Eq. \eqref{eq.16} diverge logarithmically as $\sim 2\alpha(q-1)m_{1}^{2}\ln m_{1}$ and $\sim 6\ln m_{1}$, respectively. For $\alpha > 3$, all the terms in Eq.~\eqref{eq.16} takes constant value. This completes the proof of Lemma \ref{lem:scaling_q_large}.
\end{proof}
\subsubsection{Scaling of \(\phi_{\alpha}(m_{1},m_{2},m_{3})\) for fixed \(m_{1}\) as \(m_{2},m_{3}\to\infty\) with \(m_{3}=m_{2}+m_{2}^{q}\) (\(q>1\))}
Asymptotic behavior of the $3$-point PACF $\phi_\alpha(m_{1},m_{2},m_{3})$ for $ m_{1}\leq m_{2}\leq m_{3}$, with one fixed time $m_{1}$ and
$m_{i}\rightarrow\infty$ for each $i\in\{2,3\}$. The time lag $m_{3}=m_{2}+m_{2}^{q}$, where $q>1$. The leading-order asymptotic behavior is summarized in Lemma \ref{lem:running_with_q_2tme}.
\begin{lemma}\label{lem:running_with_q_2tme}
For $m_{i} \rightarrow \infty$ with $i \in \{2, 3\}$ and $m_{1}$ fixed, and with power-law type time lag $m_{3} - m_{2} = m_{2}^{q}$, the PACF $\phi_\alpha(m_{1}, m_{2}, m_{2} + m_{2}^{q})$ asymptotically behaves as
\begin{eqnarray}\label{eq.27}
\phi_\alpha(m_{1},m_{2},m_{2}+m_{2}^{q})\sim\left\{
\begin{array}{llll}
\frac{2m_{1}}{1-\alpha}m_{2}^{1+q(1-\alpha)}\,, &\hbox{$0<\alpha< 1$}\,, \\[2mm]
2m_{1}m_{2}(q-1)\ln m_{2}\,,&\hbox{$\alpha=1$}\,,\\[2mm]
\frac{2\alpha m_{1}}{(\alpha-2)(1-\alpha)} m_{2}^{2-\alpha}\,,&\hbox{$1<\alpha<2$}\,,\\	[2mm]
4m_{1}\ln m_{2}\,, &\hbox{$\alpha=2$}\,,\\[2mm]
\text{const.}, &\hbox{$\alpha>2$}.
\end{array}
\right.
\end{eqnarray}
\end{lemma}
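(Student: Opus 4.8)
The plan is to proceed exactly as in the earlier lemmas of this section: start from the closed form Eq.~\eqref{eq.16} and extract the dominant term under the present time relation. Since $m_1$ is held fixed, the first term $\frac{2\alpha m_1^{3-\alpha}}{(\alpha-3)(2-\alpha)}$ and the additive constant $\frac{2\alpha}{\alpha-3}$ do not grow and play no role in the asymptotics. Thus only the second term, carrying the factor $m_1 m_2^{2-\alpha}$, and the third term, carrying $m_1 m_2 m_3^{1-\alpha}$, compete.

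First I would reduce $m_3$. Because $q>1$, we have $m_3 = m_2 + m_2^q = m_2^q\bigl(1 + m_2^{1-q}\bigr) \sim m_2^q$ as $m_2 \to \infty$, so $m_3^{1-\alpha} \sim m_2^{q(1-\alpha)}$ and the third term scales as $m_2^{1+q(1-\alpha)}$, while the second scales as $m_2^{2-\alpha}$. The lemma then follows from comparing these two exponents: their difference is $[1+q(1-\alpha)] - [2-\alpha] = (q-1)(1-\alpha)$. Since $q-1>0$, the sign is governed by $1-\alpha$. For $0<\alpha<1$ the third term wins, giving $\frac{2m_1}{1-\alpha}\,m_2^{1+q(1-\alpha)}$; for $1<\alpha<2$ the second term wins, giving $\frac{2\alpha m_1}{(\alpha-2)(1-\alpha)}\,m_2^{2-\alpha}$; and for $\alpha>2$ both exponents are negative (the second is $2-\alpha<0$, and since $1-\alpha<0$ with $q>1$ the third satisfies $1+q(1-\alpha)<2-\alpha<0$), so every growing contribution decays and only the constant term survives.

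The remaining work is the two boundary cases $\alpha=1$ and $\alpha=2$, where the closed-form coefficients in Eq.~\eqref{eq.16} blow up through a vanishing denominator, and this is the one genuinely delicate step. There I would abandon Eq.~\eqref{eq.16} and return to the discrete sum representation Eq.~\eqref{eq:3point-corr-sum}. At $\alpha=2$ the second sum $2m_1\sum_{k=m_1+1}^{m_2}k^2\Delta_k(\alpha)$ with $\Delta_k\sim 2/k^3$ reduces to $4m_1\sum_{k=m_1+1}^{m_2}1/k \sim 4m_1\ln m_2$, the fixed lower limit $m_1$ contributing only a constant. At $\alpha=1$ the third sum $2m_1 m_2\sum_{k=m_2+1}^{m_3}k\,\Delta_k(\alpha)$ with $\Delta_k\sim 1/k^2$ becomes $2m_1 m_2\sum_{k=m_2+1}^{m_3}1/k \sim 2m_1 m_2\ln(m_3/m_2)$; substituting $m_3\sim m_2^q$ yields $\ln(m_3/m_2)\sim\ln\bigl(m_2^{q-1}\bigr)=(q-1)\ln m_2$, which is precisely where the factor $(q-1)$ in the stated result originates. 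Collecting these cases completes the proof, and the $(q-1)$ bookkeeping at $\alpha=1$ is the only place that demands real care.
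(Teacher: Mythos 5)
Your proposal is correct and follows essentially the same route as the paper's own proof: dominant-balance on the closed form Eq.~\eqref{eq.16} with $m_1$ fixed, falling back on the sum representation Eq.~\eqref{eq:3point-corr-sum} for the marginal logarithmic cases. In fact your write-up is more complete than the paper's, which explicitly treats only the $0<\alpha<1$, $\alpha=1$, and $\alpha>2$ cases; your exponent comparison via $(q-1)(1-\alpha)$ and your explicit derivations for $1<\alpha<2$ and $\alpha=2$ supply precisely the steps the paper leaves implicit.
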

\begin{proof}
For a fixed \( m_1 \), the first term in Eq. \eqref{eq.16} does not significantly contribute to the power-law behavior, whereas the second and third terms contribute to asymptotic behavior. From Eq.~\eqref{eq.16}, for \( 0 < \alpha < 2 \), we have
\begin{align*}
\phi_\alpha(m_{1},m_{2},m_{2}+m_{2}^{q}) \sim \frac{2m_{1}}{1-\alpha}m_{2}^{1+q(1-\alpha)}\,, 
\qquad \text{for}
\quad 0<\alpha<1.  
\end{align*}
For \(\alpha = 1\), Eq. \eqref{eq:3point-corr-intgrl} exhibits logarithmic scaling, given by \(\sim 2m_{1}m_{2}(q-1)\ln m_{2}\). 
For \(\alpha > 2\), all terms in Eq. \eqref{eq.16} takes on constant values, indicating that the behavior of the system no longer depends on \(m_2\) and the scaling effects vanish,
which completes the proof of Lemma \ref{lem:running_with_q_2tme} by demonstrating the different scaling behaviors.
\end{proof}
\begin{remark}  
For $\alpha > 0$, the $3$-point PACF $\phi_\alpha(m_1, m_2, m_3)$ recovers the asymptotic scaling of the position moments, as shown in Eq. \eqref{eq.10} for $j = \{1, 2, 3\}$:  
\begin{equation*}  
\phi_\alpha(m_1, m_2, m_3) \sim \langle |\Delta x_m|^j \rangle.  
\end{equation*}  
\begin{description}  
		\item[$j=1:$]  
		For the $3$-point PACF $\phi_\alpha(m_1, m_2, m_3)$, where $0<\alpha<1$, $m_1$ and $m_2$ are fixed, and $m_3 \to \infty$, Eq. \eqref{eq.17}, exhibits the same asymptotic scaling as the first moment of displacement $\langle |\Delta x_m| \rangle$, [\cf~Eq. \eqref{eq.10} for $j = 1$].  
		
		\item[$j=2:$]  
For the $3$-point PACF $\phi_\alpha(m_1, m_2, m_2 + \xi)$, where $m_1$ is fixed, $m_2 \to \infty$  and  
\begin{equation*}
\xi = \left\{
\begin{array}{lll}
\text{const.} \,, & 0<\alpha<2, & \\ [2mm]
m_2^q\,,& 1<\alpha<2, & q>1 \,,
\end{array}
\right.
\end{equation*}
see Eqs. \eqref{eq.18} and \eqref{eq.27} respectively. In both cases, the PACF exhibits the same asymptotic scaling as the second moment of displacement $\langle |\Delta x_m|^2 \rangle$, \ie the MSD [\cf~Eq. \eqref{eq.10} for $j = 2$].

		\item[$j=3:$]  		
					For the $3$-point PACF \(\phi_\alpha(m_1, m_1 + \tau, m_1 + \tau + \xi)\), for \(0 < \alpha < 3\), where \(m_1 \to \infty\), and \(\tau, \xi\) are the time lags defined in Eq. \eqref{eq.19}, and in another case for \(1 < \alpha < 2\) when the time lag \(\tau\) is constant and \(\xi \sim m_1^q, \; q > 1\) (see Eq. \eqref{eq.26}), both exhibit the same asymptotic scaling as the third moment of displacement \(\langle |\Delta x_m|^3 \rangle\) [\cf~Eq. \eqref{eq.10} for \(j = 3\)].
\end{description}  
\end{remark} 
\begin{remark}
Given $\alpha > 0$, the $j^{\text{th}}$ position moments 
$\langle |\Delta x_m|^j \rangle$, as in Eq. \eqref{eq.10} and the generalized PACF $\phi_{\alpha}(m_1, m_2, \dots, m_j)$ for $0 < \alpha < j$, as in Eq. \eqref{eq.30}, follows the relation of time lags $\tau_{l-1} = m_l - m_{l-1}$ for $l = 2, 3, \dots, j$, where $\tau_{l-1}$ is either fixed or asymptotically $\sim m_1^q$ with $q < 1$. In the asymptotic regime $m,m_1\rightarrow\infty$, both exhibit the same scaling as
\begin{align}
\langle |\Delta x_m|^p \rangle \sim   \phi_{\alpha}(m_1, m_2, \dots, m_j) \,, \qquad p = 1, 2, 3, \dots, j\,. 
\end{align}
\end{remark}
\section{L\'evy-Lorentz gas and statistical equivalence to the Slicer Map}\label{sec.2}
The LLg as is a type of one-dimensional random walk in a quenched environment, where the surroundings are fixed, and particles move ballistically between random static scatterers with velocity ($\pm$v). Probability $1/2$ either transmits or reflects particles in such a setting. The distances $r$ between neighboring scatterers are independently and identically distributed random variables sampled from a L\'evy distribution with a given probability density
\begin{equation*}
\lambda(r) = \beta r_{0}^{\beta} \frac{1}{r^{\beta+1}}\;\;\; r\in[r_{0},+\infty),\qquad \beta>0\,,
\end{equation*}
where $r_{0}$ is the cut-off for fixing the scale length of the system. 
Barkai et al.~\cite{barkai2000one} calculated the bounds of the MSD for the equilibrium and non-equilibrium initial conditions. Subsequently, Burioni et al.~\cite{burioni2010levy} adopted some simplifying assumptions to find the asymptotic form  for non-equilibrium conditions of all moments $\langle |r(t)|^{p}\rangle$ with $p>0$\,,
\begin{eqnarray}\label{eq:MOMBUR}
  \langle |r(t)|^{p} \rangle
  \sim
  \left\{
    \begin{array}{lll}
      t^{\frac{p}{1+\beta}} \,,                 & \quad\text{for  } & \beta<1,\ p<\beta    \, ,\\
      t^{\frac{p(1+\beta)-\beta^{2}}{1+\beta}} \,,     & \quad\text{for  } & \beta<1,\ p>\beta    \, ,\\
      t^{\frac{p}{2}} \,,                    & \quad\text{for  } & \beta>1,\ p<2\beta-1 \, ,\\
      t^{\frac{1}{2}+p-\beta} \,,                & \quad\text{for  } & \beta>1,\ p>2\beta-1 \, . 
    \end{array}
  \right. 
\end{eqnarray}

For $p=2$, the MSD implies
\begin{eqnarray}\label{eq:LLgMeanSquare}
  \langle r(t)^{2} \rangle 
  \sim 
  t^\xi, \;\;\;\;
  \xi
  =
\left\{
    \begin{array}{lll}
      2 - \frac{\beta^{2}}{(1+\beta)} \,,   & \quad\text{for  } & \beta < 1    \, ,\\
      \frac{5}{2}-\beta \,,             & \quad\text{for  } & 1  \leq \beta < \frac{3}{2}  \, ,\\
      1  \,,                          & \quad\text{for  } & \frac{3}{2} \leq \beta         \, .
    \end{array}
  \right. 
\end{eqnarray}%
Unlike the SM, which exhibits sub-diffusive transport for \( \alpha > 1 \), non-equilibrium initial conditions for the LLg only result in super-diffusive (\( 0 < \xi < 3/2 \)) or diffusive (\( \xi \geq 3/2 \)) regimes; sub-diffusion is not expected. In the super-diffusive regime (\( 0 < \alpha < 1 \)), the moments of the SM can be mapped to those of the LLg \cite{salari2015simple, giberti2019equivalence}. All the moments of the SM, Eq. \eqref{eq.10}, a scale similar to those conjectured and numerically validated for the LLg, Eq.~\eqref{eq:MOMBUR} once the second moments do. This holds if the following condition is met with \( p = 2 \) (\cf~Eq.~\eqref{eq.10})
\begin{eqnarray}\label{eq.39}
\alpha =\left\{
\begin{array}{lll}
\frac{\beta^{2}}{1+\beta}, & \mbox{for}&\hbox{$0<\beta\leq(p-1)$}, \\ [2mm]
\beta - \frac{1}{2}, & \mbox{for}&\hbox{$(p-1)<\beta\leq \frac{p+1}{2}$},\\ [2mm]
\frac{p}{2}, & \mbox{for}&\hbox{$\beta>\frac{p+1}{2}$}.	
\end{array}
\right.
\end{eqnarray}
When adopting the above relation (\cf~Eq.~\eqref{eq.39}), the two distinct processes share the same position moments and low-order PACF despite their apparent differences, which suggests a fundamental connection or underlying universal behavior between them. This indicates that these processes are indistinguishable in terms of moments and low-order PACF \cite{salari2015simple, giberti2019equivalence, vollmer2021displacement}. Next, we extend the equivalence to the scaling forms of the $3$-point PACF and investigate whether each analytically derived correlation scaling of the SM demonstrates the same strong equivalence or deviates to any degree. The various scaling forms of the $3$-point PACF are calculated analytically for the SM in section \ref{sec:corr-scalings}. Subsequently, we compare each of them with the numerically estimated PACF of the LLg, for which an analytical expression is still an open question.  
\subsection{Numerical estimation of position correlation function of the L\'evy-Lorentz gas}
The generalized (or $n$-point) PACF of the LLg for the position of trajectories at different times can be defined as
\begin{equation}\label{eq:npointcorrllg}
\Phi_\beta(t_{1},t_{2},\cdots , t_{n})= \langle(r(t_{1})r(t_{2})\cdots r(t_{n} \rangle)_{\beta} = \mathbb{E}[(r(t_{1})r(t_{2})\cdots r(t_{n})],
\end{equation} 
where $\beta$ denotes the scatterers distribution or diffusion parameter, $\langle \cdot \rangle$ and $\mathbb{E}$ represent the averages, first averaging over the particles and then on the given random scatterer realization. In the following section, we compare the analytically computed scaling of the $3$-point PACF of the SM with the correlations of the LLg. Therefore, we define the $3$-point PACF of the LLg in section \ref{subsec.7}, and compare the numerically estimated power-law behavior with that of the SM correlation scaling.
\begin{lemma}\label{lem:npoint_pwr_LLg}
For \( \beta > 0 \), and \( t_i \to \infty \) for each \( i \in \{1, 2, \dots, j\} \), the generalized PACF \( \Phi_\beta \), Eq.~\eqref{eq:npointcorrllg}, of the LLg has the following asymptotic form
\begin{equation}\label{eq:np-corr-momnt} 
\Phi_\beta(t_1,t_2,\cdots,t_j) \sim K(\tau,\xi,\dots,\zeta) \;t_\mathcal{I}^{\nu_p}\,, 
\end{equation} 
for \( p = 2, 3, \dots, j \), where \( (\tau, \xi, \dots, \zeta) \) represents the time lags, \( K(\tau, \xi, \cdots, \zeta) \) denotes the pre-factor, and \( t_\mathcal{I} \) corresponds to any of $t_1,t_2,\cdots,t_j$, depending on the selected time lags. \( \nu_p \) is the power law exponent for the respective order of the correlation function, which is obtained by fitting the data.
\end{lemma}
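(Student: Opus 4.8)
The plan is to establish the claimed asymptotic form by direct Monte Carlo simulation of the LLg, since, as noted in the text, no closed-form expression for its higher-order PACF is available. First I would construct a quenched disorder realization by placing scatterers on the line with inter-scatterer spacings drawn i.i.d.\ from the L\'evy density $\lambda(r)=\beta r_0^\beta/r^{\beta+1}$, and impose the non-equilibrium initial condition with the particle starting at a scatterer at the origin. For each realization I evolve a large ensemble of particles under the ballistic-plus-scattering rule (velocity $\pm v$; transmission or reflection with probability $1/2$ at each scatterer), recording the signed positions $r(t_1),\ldots,r(t_j)$ at the prescribed sampling times. The product $r(t_1)\cdots r(t_j)$ is then averaged first over the particle ensemble and subsequently over many disorder realizations, realizing the double average $\langle\cdot\rangle_\beta=\mathbb{E}[\cdot]$ that defines $\Phi_\beta$ in Eq.~\eqref{eq:npointcorrllg}.

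Next, for each temporal configuration employed for the SM I would impose the identical time-lag composition among $t_1,\ldots,t_j$ (constant lags, $\ln t_1$, $t_1^q$, and so on) and vary the controlling time $t_\mathcal{I}$ over several decades. Plotting $\Phi_\beta$ against $t_\mathcal{I}$ on log-log axes should expose a linear tail whose slope is the exponent $\nu_p$; a least-squares fit restricted to the asymptotic window (past the ballistic transient, before the finite-sampling cutoff) returns both $\nu_p$ and the prefactor $K(\tau,\xi,\ldots,\zeta)$. The identification of which of $t_1,\ldots,t_j$ plays the role of $t_\mathcal{I}$ in each regime parallels the SM case analysis of Section~\ref{sec:corr-scalings}: it is the time whose term dominates the correlation, exactly as in Eq.~\eqref{eq.16}. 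As a built-in consistency check, in the coincident-time limit $t_1=\cdots=t_j$ the correlation collapses to the single-time moment $\langle|r(t)|^{j}\rangle$, whose exponent is fixed by Eq.~\eqref{eq:MOMBUR}, so the fitting pipeline can be validated against a known benchmark before being applied to the genuinely multi-time cases.

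The principal obstacle will be extracting clean power laws under the heavy-tailed L\'evy statistics: rare long ballistic excursions dominate the moments, so the estimator of $\Phi_\beta$ has slowly converging variance and demands very large particle ensembles together with many disorder realizations to resolve the tail reliably. I would gauge convergence by verifying that the fitted slope stabilizes as the sample size and the simulation box (scaled by $r_0$) are increased. A secondary difficulty is delimiting the genuine scaling region, since short times are ballistic and long times are truncated by finite sampling; the fit window must be chosen self-consistently so that $\nu_p$ is insensitive to the endpoints. Once the exponents $\nu_p$ are stable across all configurations, the lemma follows by exhibiting $\Phi_\beta\sim K(\tau,\xi,\ldots,\zeta)\,t_\mathcal{I}^{\nu_p}$ for $p=2,3,\ldots,j$ in each case, which simultaneously furnishes the quantitative comparison with the SM scalings derived above.
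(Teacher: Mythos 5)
Your proposal takes essentially the same route as the paper: the paper provides no analytic proof of this lemma, but substantiates it precisely as you describe, by Monte Carlo simulation of the LLg (quenched L\'evy-distributed scatterers, non-equilibrium initial condition, double average over particles and disorder) under the same time-lag compositions used for the SM, followed by power-law fits to the final decades of the log-log data to extract $\nu_p$ and the prefactor $K$. Your added safeguards --- benchmarking the pipeline against the known single-time moments of Eq.~\eqref{eq:MOMBUR} and the explicit convergence checks for the heavy-tailed statistics --- go somewhat beyond what the paper spells out, but they refine rather than change the argument.
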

\subsection{$3$-point position auto correlation function of the L\'{e}vy Lorentz gas}\label{subsec.7}
We define the $3$-point PACF for $t_{i}>0$, for each $i\in \{1,2,3\}$, by requesting $n=3$ in Eq. \eqref{eq:npointcorrllg}, and then by Lemma \ref{lem:npoint_pwr_LLg} the correlation is given as 
\begin{equation} \Phi_\beta(t_{1},t_{2},t_{3})={\langle r(t_{1})r(t_{2})r(t_{3})\rangle}_{\beta} \sim K(\tau,\xi) \;t_\mathcal{I}^{\nu_3}\,,
\end{equation} 
$\langle \cdot \rangle$ represents the average distance over times. 
The primary objective is to compare the numerically estimated LLg correlations with the scaling behavior of the SM, for which the analytical results are presented in section \ref{sec:corr-scalings}. In that section, we calculated the asymptotic scaling of the $3$-point PACF of the SM for various time relationships. Here, we employ the same time composition as that used previously (\cf~Sec. \ref{sec:corr-scalings}):
\begin{description}
\item[1.] 
	\(\Phi_\beta(t_{1}, t_{2}, t_{3})\) for \(t_{3}\rightarrow\infty\) with \(t_{1}\) and \(t_{2}\) fixed.
\item[2.]
	\(\Phi_\beta(t_{1}, t_{2}, t_{2} + \xi)\) for \(t_{2}\rightarrow\infty\) with \(\xi > 0\) and \(t_{1}\) fixed.
\item[3.]
	\(\Phi_\beta(t_{1}, t_{1} + \tau, t_{1} + \tau + \xi)\) for \(t_{1}\rightarrow\infty\) with \(\tau, \xi > 0\) as constants.
\item[4.]
	\(\Phi_\beta(t_{1}, t_{1} + t_{1}^{q}, t_{1} + 2t_{1}^{q})\) for \(t_{1}\rightarrow\infty\) with \(0 < q < 1\).
\item[5.]
	\(\Phi_\beta(t_{1}, t_{1} + \ln t_{1}, t_{1} + 2\ln t_{1})\) for \(t_{1}\rightarrow\infty\).
\item[6.]
	\(\Phi_\beta(t_{1}, t_{1} + \sin t_{1}, t_{1} + 2\sin t_{1})\) for \(t_{1}\rightarrow\infty\).
\item[7.]
	\(\Phi_\beta(t_{1}, t_{1} + \tau, t_{1} + \tau + t_{1}^{q})\) for \(t_{1}\rightarrow\infty\) with \(\tau\) and \(q\) as constants and \(0 < q < 1\).
\end{description} 
For all these cases, we will compare the asymptotic scaling of the SM with \(\text{LLg}\).
\begin{figure}[t]%
	\centering
	\subfloat[\centering Varying $\beta$]{{\includegraphics[width=0.51\linewidth]{./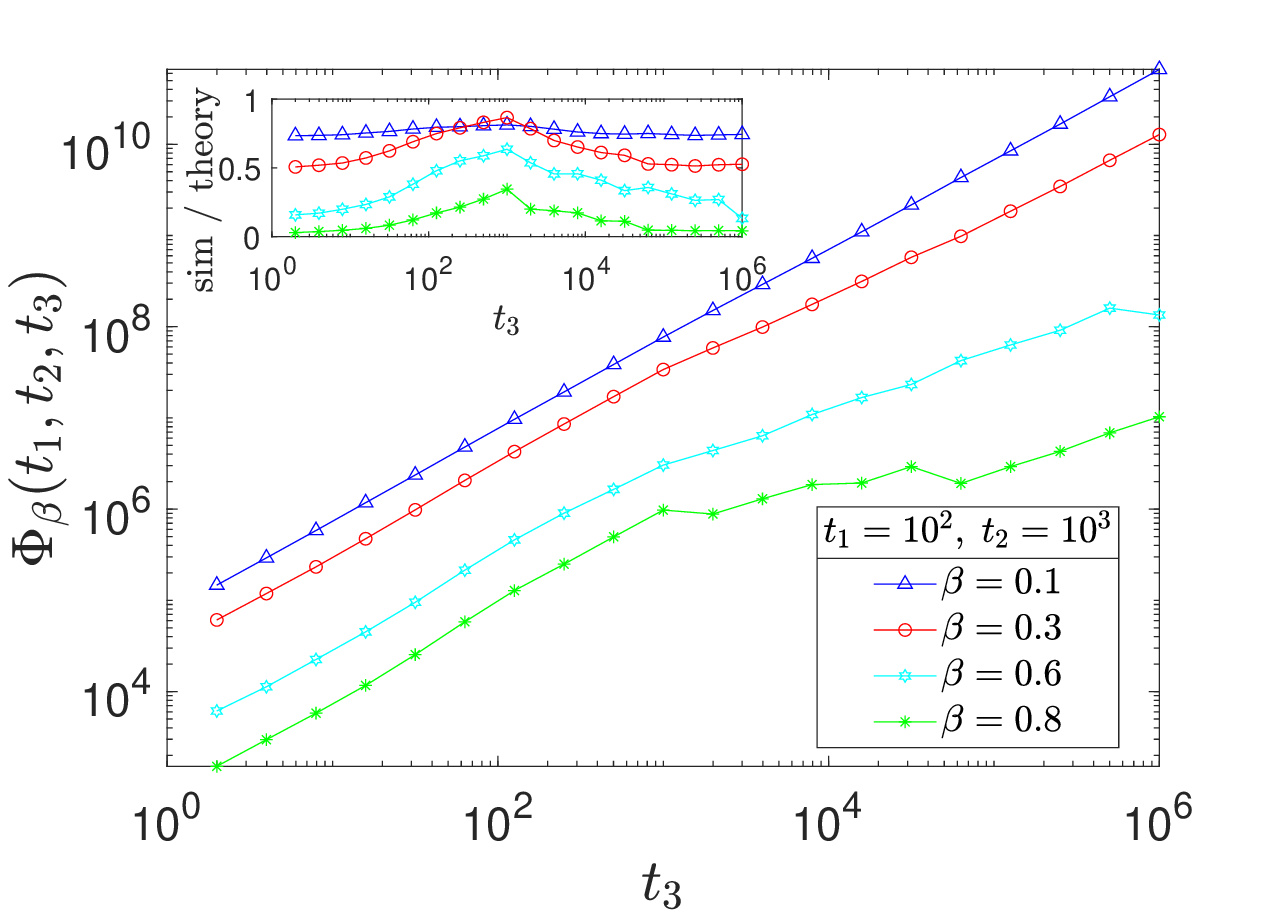} } \label{fig:m1m2fixed_a} }%
	\subfloat[\centering $\beta=0.1$]{{\includegraphics[width=0.51\linewidth]{./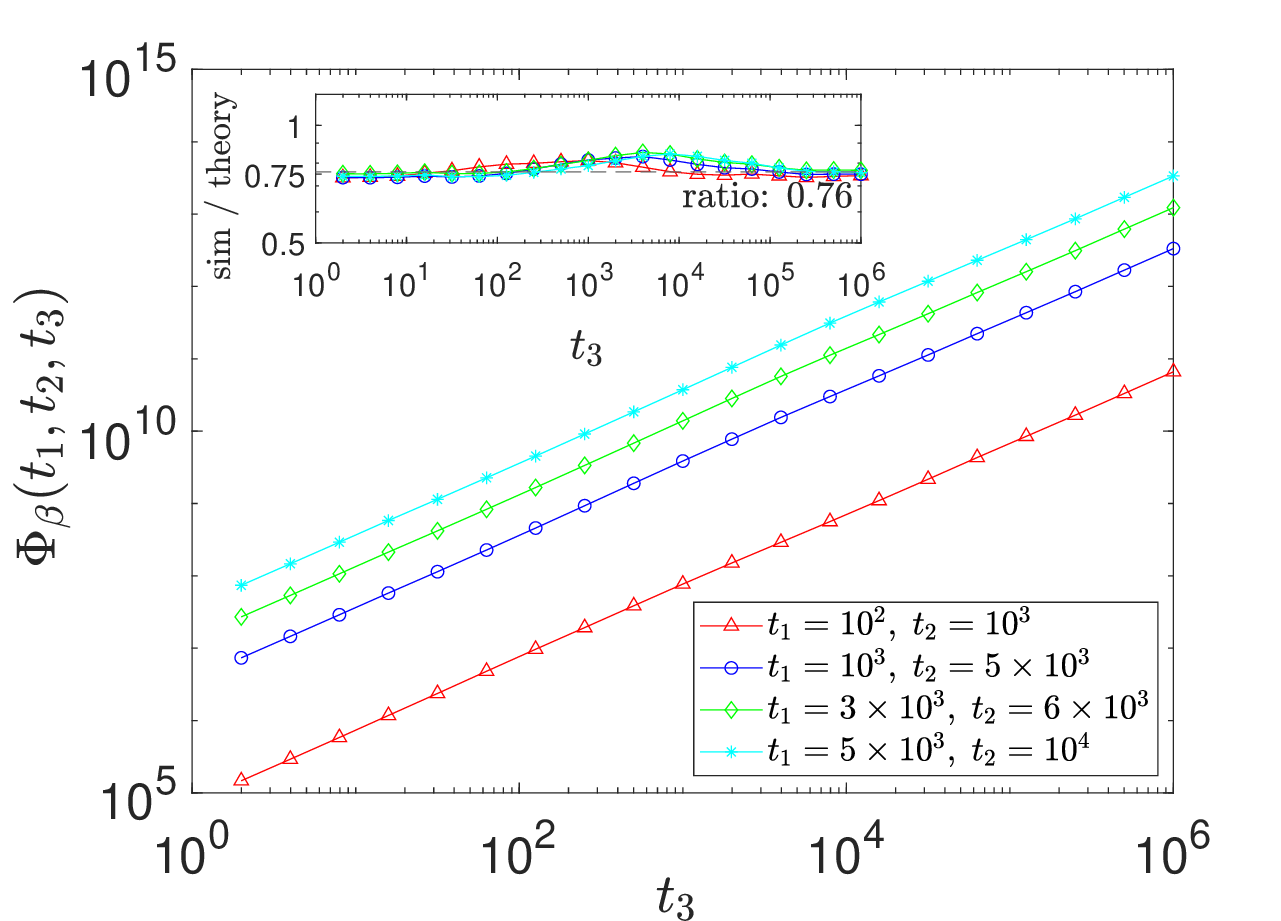} } \label{fig:m1m2fixed_b}}%
	\caption{\label{fig:m1m2fixed}These figures present the log-log plots of the $3$-point PACF, $\Phi_\beta(t_{1},t_{2},t_{3})$ as a function of $t_3$, with $t_1,t_2$ fixed and $t_3\rightarrow\infty$. In Figure \ref{fig:m1m2fixed_a}, the $3$-point LLg correlation data is plotted for several values of $\beta$, as indicated in the legend, and the slopes of the best-fit lines demonstrate the power-law behavior of the LLg.
Similarly, Fig. \ref{fig:m1m2fixed_b} focuses on $\beta=0.1$, presenting data for different fixed combinations of times $t_1$ and $t_2$ along with their corresponding best fit to the data. In Fig. \ref{fig:m1m2fixed}, the two insets show the ratio of the numerically estimated LLg correlation $\Phi_\beta$ (\cf~Eq.~\eqref{eq:m1m2fix_plaw_llg}) to the correlation scaling value $\phi_\alpha$ (\cf~Eq.~\eqref{eq:m1m2fix_plaw_SM}) of the SM which was calculated analytically.}%
\end{figure}
\subsubsection{Scaling test of the correlation $\Phi_\beta(t_{1},t_{2},t_{3})$ with $t_{1},t_{2}$ are fixed}
In this section, we compare the scaling behavior of the $3$-point PACF for the LLg with the SM correlation under the same time configuration. For the SM, the asymptotic scaling form \(\phi_\alpha(m_1, m_2, m_3)\) is given by Eq.~\eqref{eq.17}. By fixing \(m_1, m_2\), and \(0 < \alpha < 1\), we obtain
\begin{equation}\label{eq:m1m2fix_plaw_SM}
\phi_\alpha(m_1, m_2, m_3) \sim \frac{m_1 m_2}{1-\alpha} m_3^{1-\alpha},\quad \text{as}\quad m_3 \rightarrow \infty.
\end{equation}
For the LLg, numerical simulations yield data for \(\Phi_\beta(t_1, t_2, t_3)\), plotted against \(t_3\) in Fig. \ref{fig:m1m2fixed}. Figure \ref{fig:m1m2fixed_a} shows the results for fixed \(t_1=10^2,\, t_2=10^3\) for several values of \(\beta\) between \(0.1\) and \(0.8\). In contrast, Fig.~\ref{fig:m1m2fixed_b} displays data for fixed \(\beta = 0.1\) across four combinations of \(t_1\) and \(t_2\). In both cases, the slopes of the fitted lines in the final decades confirm a clear power-law behavior, supporting the asymptotic scaling relation
\begin{eqnarray}\label{eq:m1m2fix_plaw_llg}
\Phi_\beta(t_1, t_2, t_3) \sim K\,t_3^{\nu_3}, \quad t_3 \rightarrow \infty.
\end{eqnarray}
The comparison shows that the scaling exponents  $1-\alpha$ and $\nu_3$ from Eqs.~\eqref{eq:m1m2fix_plaw_SM} and \eqref{eq:m1m2fix_plaw_llg}, respectively, are closely aligned for small values of \(\beta\), as shown in Fig.~\ref{fig:m1m2fixed_b}. For a larger \(\beta\), the agreement remains fair, considering the numerical accuracy of the data (Fig.~\ref{fig:m1m2fixed_a}). Notably, the insets in Fig.~\ref{fig:m1m2fixed} illustrates the ratio between the simulated and theoretical results, revealing minor discrepancies within the interval \((0,1)\), which further validates the accuracy of our numerical approach.
This agreement underscores the equivalence of the scaling laws, despite originating from distinct microscopic dynamics, provided that the \((\alpha, \beta)\) functional relationship is properly tuned (\cf~Eq.~\eqref{eq.39} for \(p=2\)). This highlights the robustness of the theoretical framework in capturing scaling behavior across different systems.

\begin{figure}[t]%
	\centering
	\subfloat[\centering varying $\beta$]{{\includegraphics[width=8.2cm]{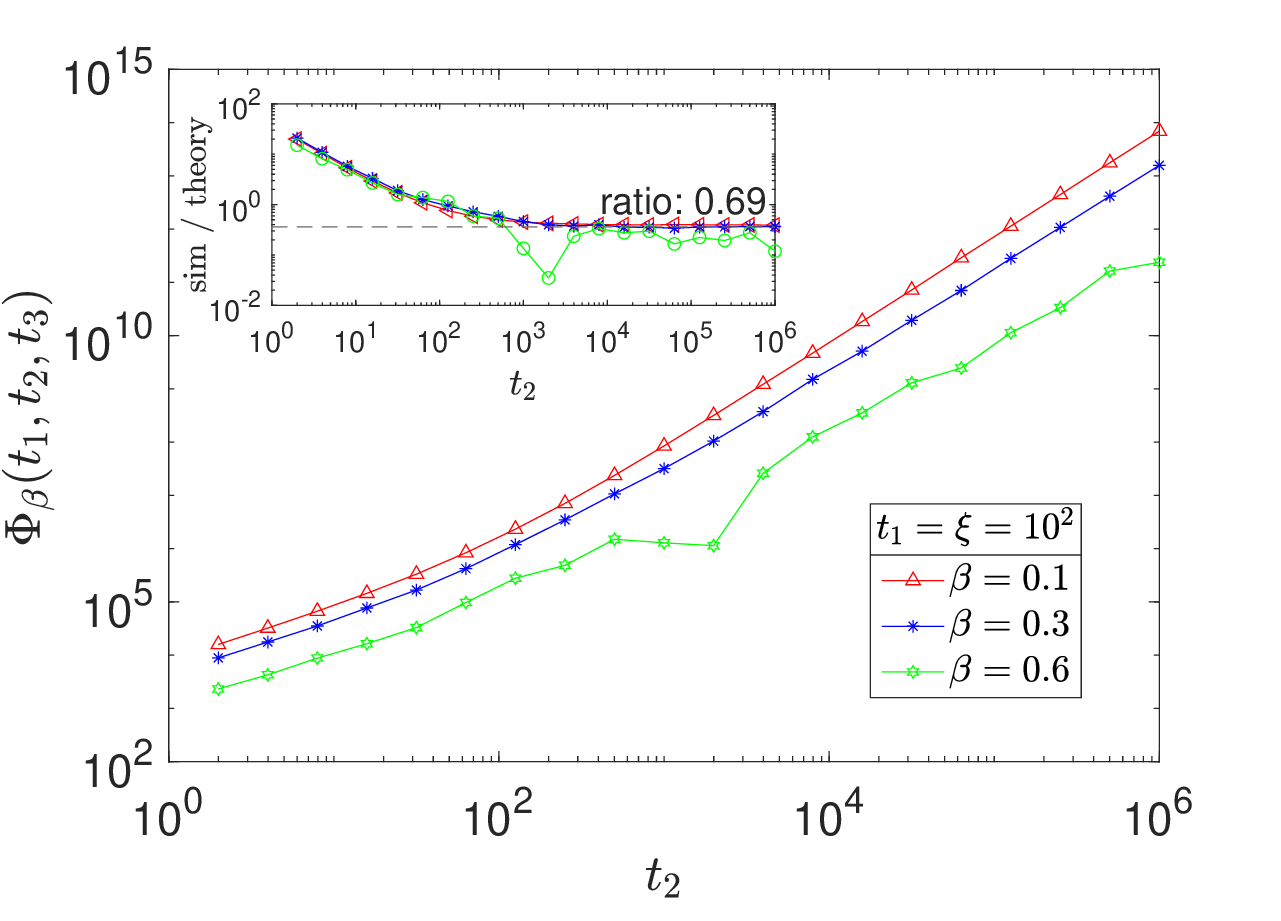} }\label{fig:t1fixd_a}}%
	\subfloat[\centering $\beta=0.1$]{{\includegraphics[width=8.2cm]{./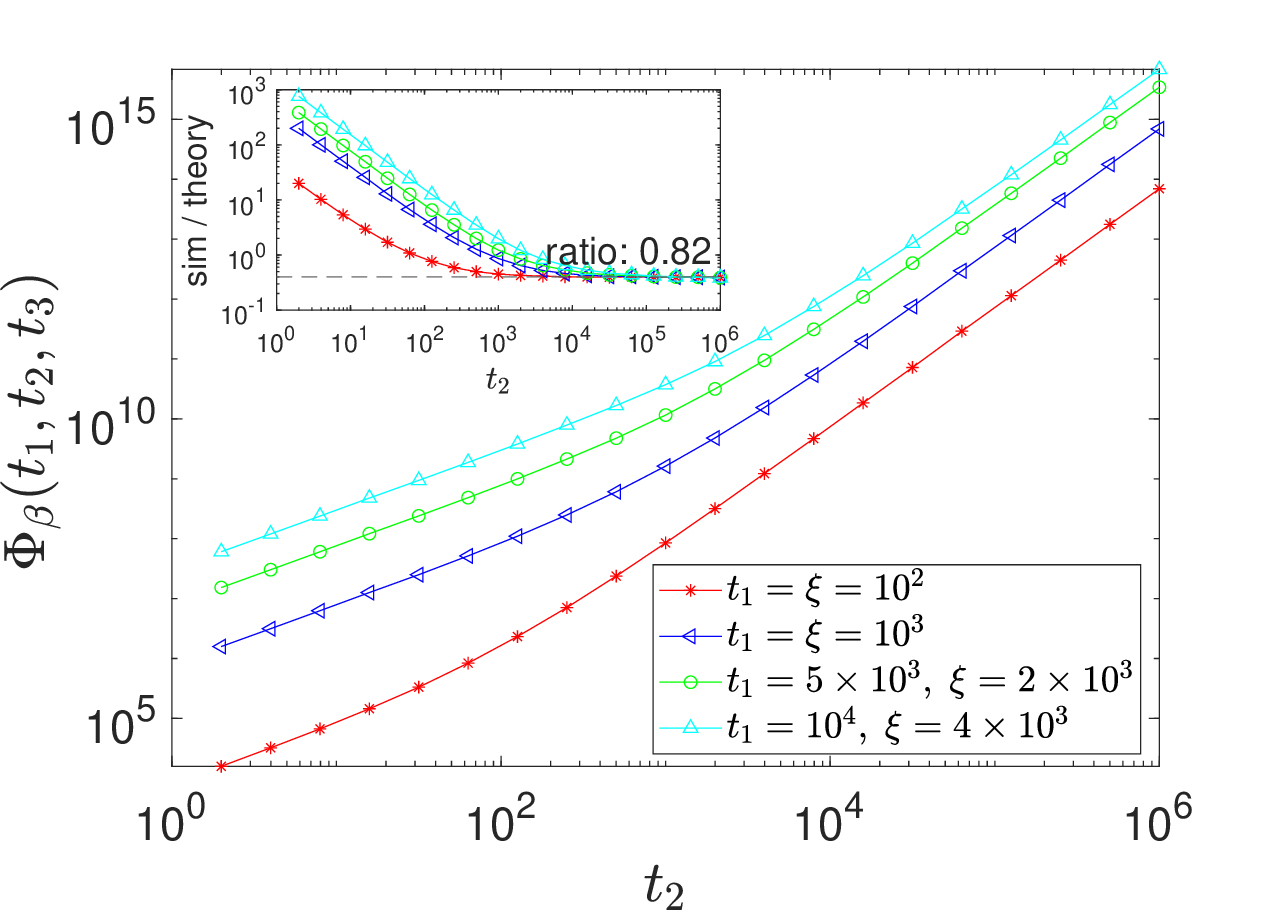} }\label{fig:t1fixd_b}}%
	\caption{\label{fig:t1fixd} 
These figures show the log-log plots of the $3$-point PACF $\Phi_\beta(t_1, t_2, t_2+\xi)$ as a function of $t_2$, where $t_1$ and $\xi$ are fixed, and $t_2 \to \infty$. Figure \ref{fig:t1fixd_a} represents the $3$-point PACF $\Phi_\beta$ of the LLg for several values of $\beta$ with fixed values $t_1 = \xi = 10^2$, as indicated in the legend. The slopes of the best-fit lines demonstrate the power-law behavior of the LLg. Similarly, in Figure \ref{fig:t1fixd_b}, the data for $\beta = 0.1$ with several fixed combinations of $t_1$ and $\xi$ are shown, along with their respective best-fit lines. The two insets display the ratio of the numerically estimated LLg correlation $\Phi_\beta$ (\cf~Eq.~\eqref{eq:m1fxd_xi_LLg}) to the correlation scaling $\phi_\alpha$ (\cf~Eq.~\eqref{eq:m1fxd_xi_SM}) of the SM, which was computed analytically. 
}
\end{figure}
\subsubsection{Scaling test of $\Phi_\beta(t_{1},t_{2},t_{2}+\xi)$ $t_1$ fixed, and $\xi>0$}

In this section, we compare the numerically estimated $3$-point PACF of the LLg, 
\(\Phi_\beta(t_{1}, t_{2}, t_{2}+\xi)\), where \(t_{1}\) and \(\xi\) take several fixed values and \(t_{2} \rightarrow \infty\), with the scaling of the SM correlation derived in section \ref{sec:m1fixd_m2_run_SM}, given by Eq. \eqref{eq.18}. For \(0 < \alpha < 2\), we have  
\begin{equation}\label{eq:m1fxd_xi_SM} 
\phi_\alpha(m_{1},m_{2},m_{2}+\xi)\sim \frac{4m_{1}}{2-\alpha} m_{2}^{2-\alpha},\quad \quad m_{2}\rightarrow\infty \,.
\end{equation}
We collect the LLg correlation data for multiple values of \(\beta\), with some fixed \(t_1 \) and constant time lags \(\xi\). In Fig. \ref{fig:t1fixd}, \(\Phi_\beta(t_1, t_2, t_2+\xi)\) are plotted as a function of \(t_2\), and the results for varying \(\beta\) ranging from $0.1$ to $0.6$, with \(t_1 = 10^2\) and \(\xi = 10^2\), are shown in Fig.~\ref{fig:t1fixd_a}. Similarly, for a \(\beta = 0.1\), the data is presented for four different fixed values of \(t_1\) and time lags \(\xi\) in Fig.~\ref{fig:t1fixd_b}. Consequently, the PACF of the LLg asymptotically follows  
\begin{eqnarray}\label{eq:m1fxd_xi_LLg}
\Phi_{\beta} (t_1,t_2,t_2+\xi) \sim K\, t_{2}^{\nu_3},\qquad t_2 \rightarrow \infty.
\end{eqnarray} 
The SM correlation scaling matches the numerically estimated LLg correlation exponent \(\nu_3\) when the \((\alpha,\beta)\) is properly tuned (\cf~Eq. \eqref{eq.39} for \(p=2\)).  
For smaller values of \(\beta\), the numerical results show strong agreement with the theoretical predictions (see Fig.~\ref{fig:t1fixd}), while for larger \(\beta\), the agreement is satisfactory (see Fig.~\ref{fig:t1fixd_a}). The insets of Fig.~\ref{fig:t1fixd} shows the ratio of the numerical simulations to the theoretical results for the LLg and the SM (\cf~Eqs.~\eqref{eq:m1fxd_xi_LLg} and \eqref{eq:m1fxd_xi_SM}). A small deviation, confined to the interval \((0,1)\), confirms the accuracy of our numerical results.  
\begin{figure}[t] 
\hspace{-8mm}
\begin{subfigure}{.5\textwidth}
  \centering
  \includegraphics[width=1.12\linewidth]{./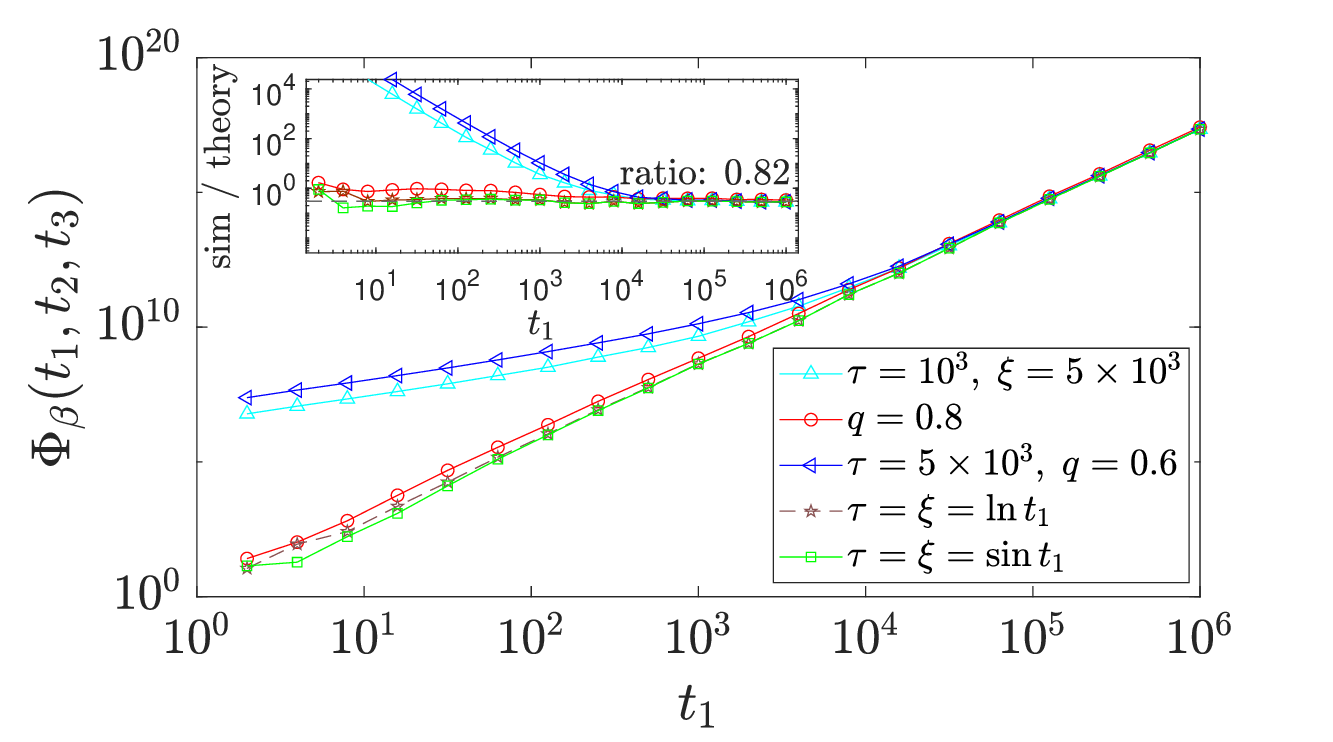}  
  \caption{different time lags for $\beta=0.3$}
  \label{fig:sub-first}
\end{subfigure}
\;
\begin{subfigure}{.5\textwidth}
  \centering
  \includegraphics[width=1.11\linewidth]{./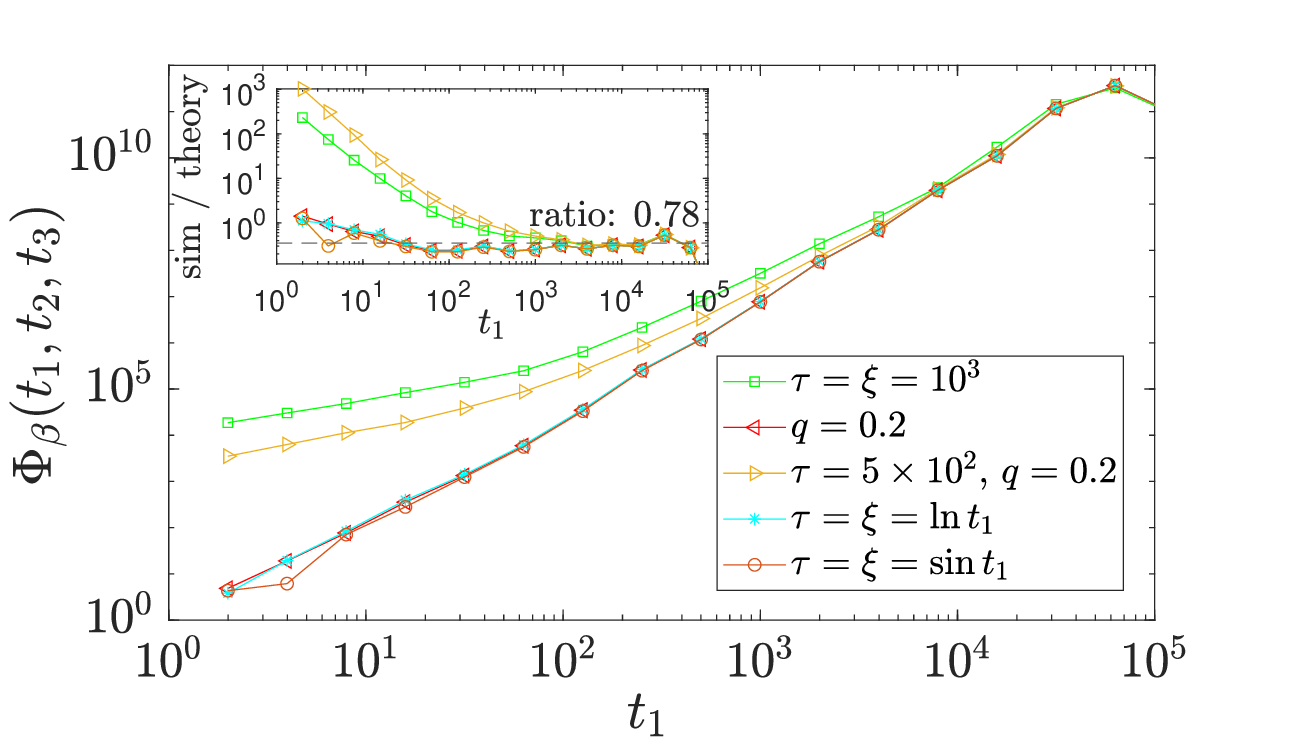}  
  \caption{different time lags for $\beta=0.9$}
  \label{fig:sub-second}
\end{subfigure}
\caption{\label{fig:running-t_1}These figures show the log-log plots for $\beta = 0.3$ (left panel) and $\beta = 0.9$ (right panel) of the 3-point PACF $\Phi_\beta(t_1, t_1 + \tau, t_1 + \tau + \xi)$ as a function of $t_1$, where the time lags $\tau$ and $\xi$ are defined in Eq. \eqref{eq.19}. The data are obtained for a wide range of combinations of $\tau$ and $\xi$, as displayed in the legends. Power law behavior in the last few decades, determined by the slope of the best fit to the data. The two insets show the ratio of the numerically estimated LLg correlation $\Phi_\beta$ (\cf~Eq. \eqref{eq:allrun_pwrlaw_LLg}) to the correlation scaling $\phi_\alpha$ (\cf~Eq. \eqref{eq:allrun_pwrlaw_SM}) of the SM, which is computed analytically.
}
\end{figure}
\subsubsection{Scaling test for  $\Phi_\beta(t_{1},t_{1}+\tau,t_{1}+\tau+\xi)$ with fixed time lag $\tau,\xi >0$}
In Lemma \ref{lem:univrsl-scaling}, the scalings of the $3$-point PACF of the SM were derived. For \(0 < \alpha < 3\), with \(\tau\) and \(\xi\) are defined in Eq. \eqref{eq.19}, the scaling behavior of the SM correlation can be expressed as  
\begin{equation}\label{eq:allrun_pwrlaw_SM}  
\phi_\alpha(m_{1}, m_{1}+\tau, m_{1}+\tau+\xi) \sim \frac{6}{3-\alpha} m_{1}^{3-\alpha}, \quad \text{as} \quad m_{1} \to \infty.  
\end{equation}  
Similarly, the LLg correlation data are collected for specific values of \(\beta\), with some time lags \(\tau\) and \(\xi\), as indicated in Eq.~\eqref{eq.19}. The LLg correlation  
$\Phi_\beta(t_{1}, t_{1}+\tau, t_{1}+\tau+\xi)$  
is plotted as a function of \(t_1\) in Fig. \ref{fig:running-t_1}. Results for $\beta=0.3$ and $0.9$ with several time lags $\tau$ and $\xi$ are presented in Figs. \ref{fig:sub-first} and \ref{fig:sub-second}, respectively. The slopes of the data fit lines for the last few decades reveal power-law behavior for the LLg, with exponents denoted by $\nu_3$.  
This indicates that the LLg correlation exhibits asymptotic scaling behavior for large \(t_1\). Hence, the PACF of the LLg follows
\begin{eqnarray}\label{eq:allrun_pwrlaw_LLg}  
\Phi_{\beta} (t_1,t_1+\tau,t_1+\tau+\xi) \sim K\, t_{1}^{\nu_3},\qquad t_1 \rightarrow \infty.  
\end{eqnarray}  
As in previous cases, this behavior also follows the \((\alpha,\beta)\) dependence when these parameters are properly tuned (\cf~Eq. \eqref{eq.39} for \(p=2\)). We observe that for both $\beta$ values, the data collapse onto a single line for all time-lag relationships defined in Eq. \eqref{eq.19}. This supports the SM correlation prediction, indicating that the dependence is solely on $\alpha$. The last few decades of data in Fig. \ref{fig:running-t_1} further confirm this. The insets of Fig. \ref{fig:running-t_1} shows the ratio of the numerical LLg correlation data to the theoretical correlation expression of the SM. The small deviation confined to the interval \((0,1)\), hallmark the accuracy of the numerical results.

Figure~\ref{fig:varing_beta_diff_lags} illustrates how the SM and the LLg dynamics yield the same scaling behavior for the $3$-point PACF in the superdiffusive regime.  In Fig.~\ref{fig:h_fix_several_b}, the main panel shows $\Phi_\beta(t_1,t_2,t_3)$ on a log–log scale for various $\beta$, exhibiting excellent power‐law scaling that matches the SM prediction across all $\beta$. The lower subpanel plots the ratio of simulation to theory, which stays close to one for all cases, indicating remarkable agreement between simulation and theory. Fig.~\ref{fig:q04_several_b} repeats this test by allowing $\tau$ and $\xi$ to grow as $t_1^{0.4}$; the upper panel shows that the numerical data for all $\beta$ follow the theoretical power‐law trend, and the simulation/theory ratio in the lower subpanel remains uniformly near unity, again confirming that the deterministic and stochastic models produce indistinguishable PACF scaling. Thus, Figure~\ref{fig:varing_beta_diff_lags} provides direct evidence that, in the limit of large and well-separated times, the $3$-point PACF in both SM and LLg follows the same superdiffusive scaling law, supporting the conclusion that a common scaling law governs deterministic and stochastic transport in this regime.

\begin{figure}[t] 
\hspace{-12mm}
\begin{subfigure}{.5\textwidth}
  \centering
  \includegraphics[width=1.1\linewidth]{./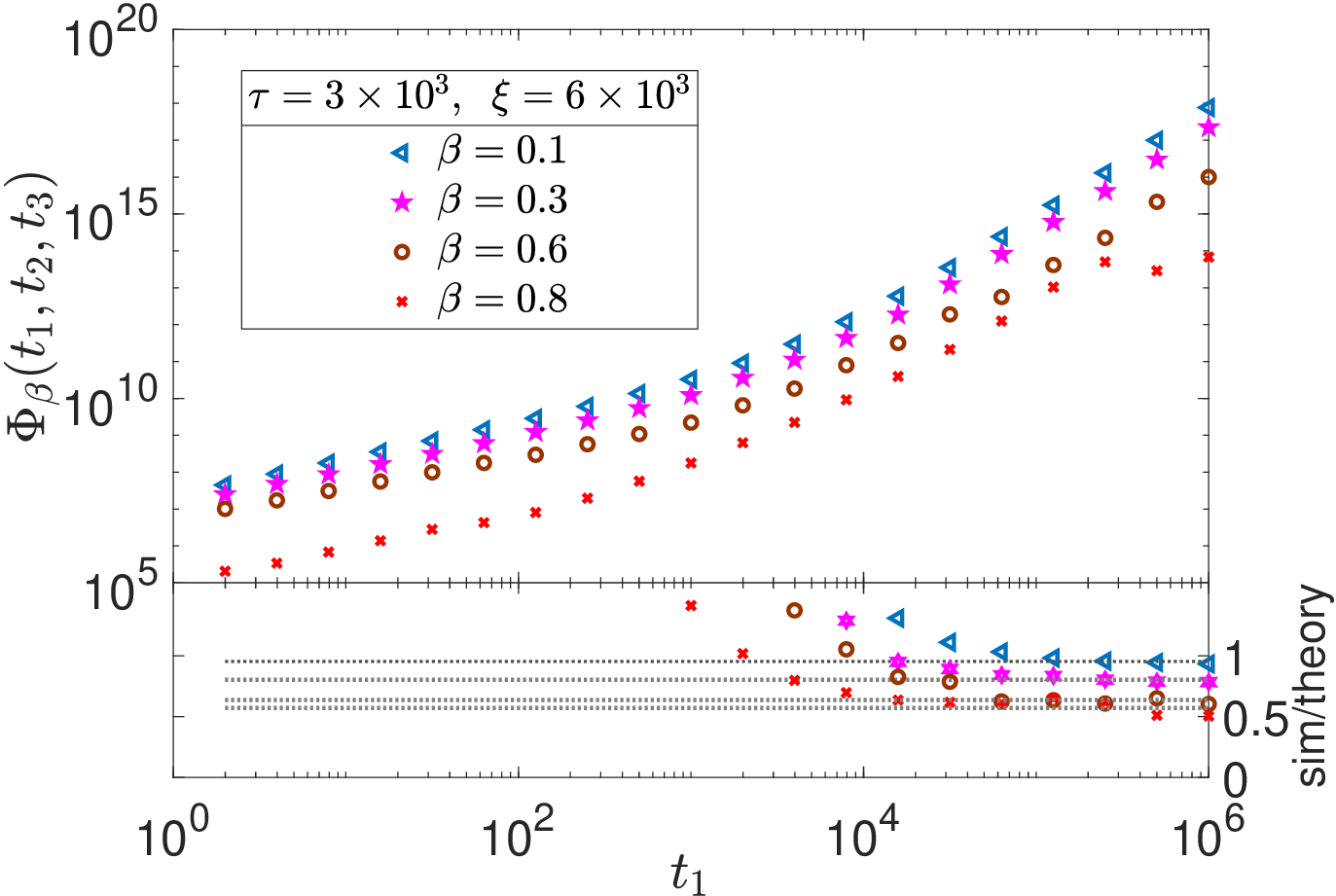}  
  \caption{varying $\beta$}
  \label{fig:h_fix_several_b}
\end{subfigure}
\hspace{0.55cm}
\begin{subfigure}{.5\textwidth}
  \centering
  \includegraphics[width=1.11\linewidth]{./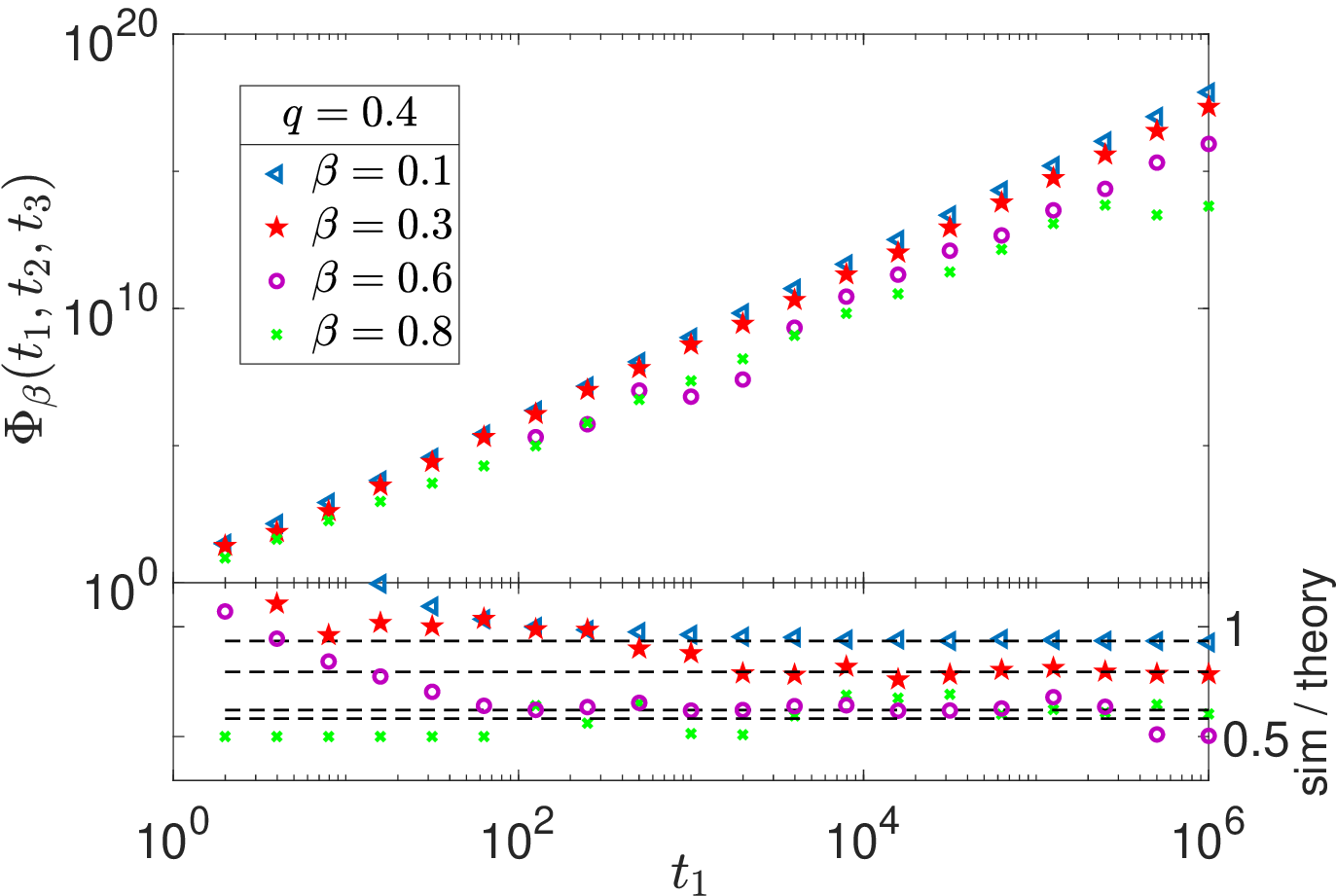}  
  \caption{varying $\beta$}
  \label{fig:q04_several_b}
\end{subfigure}
\caption{\label{fig:varing_beta_diff_lags}
%
Log-log plot of the $3$-point PACF $\Phi_\beta(t_1,t_2,t_3)$ versus $t_1$ for  varying $\beta$ (see legends). In Fig.~ \ref{fig:h_fix_several_b}, time lags $\tau=3\times10^3$ and $\xi=6\times10^3$ are fixed; in Fig. \ref{fig:q04_several_b}, $\tau=\xi\sim t_1^{q}$, where $q=0.4$. Different markers denote numerical simulation data of the LLg correlation. The lower subpanels are on semilog scale and display the ratio \emph{simulation/theory} with a horizontal dotted line at the average value over the last decade of $t_1$, which remains close to unity for all $\beta$, indicating a strong match. These panels demonstrate that both with fixed and with time-dependent lags, the simulated $3$-point PACF collapses onto the analytically predicted power laws for all $\beta$.}
\end{figure}
\section{Discussion and conclusion}\label{sec.5}
Position moments and correlations often explain the transport properties of anomalous diffusion, leading to different diffusive regimes. 
The transition from normal to anomalous diffusion is described in the framework of generalized diffusion dynamics. 
Strong anomalous superdiffusion is characterized by the position moments and correlations dominated by ballistic trajectories, and it depends on the transport exponent $\gamma$. This results in two-part linear scaling of $\gamma(p)$, as shown in Eq. \eqref{eq:scale-invrt-momnts}.

In this paper, we investigated the scaling behavior of higher-order PACFs in deterministic and stochastic systems, focusing on the SM and the LLg. First, we analytically derived the generalized PACF $\phi_\alpha$ of the SM and some of its scaling forms, following the relationships between $j$ times (see Lemma \ref{lem:jpoint-scaling}). For the special case $j=3$, we analytically derived the $3$-point PACF $\phi_\alpha(m_{1},m_{2},m_{3})$ and some of its scaling forms to determine the power-law behavior for different relationships between $m_{1}$, $m_{2}$, and $m_{3}$.
Based on the relationship in Eq. \eqref{eq.39}, we analyzed the interplay between the power-law exponents of these two dynamics. 
Our analysis shows a strong overlap in the correlation scaling properties of these systems, particularly in the strongly superdiffusive regime. The practical significance of studying anomalous diffusion via deterministic systems like the SM lies in bridging theoretical insights with real-world applications. Deterministic dynamics offer analytical tractability for complex transport phenomena observed in intracellular transport in crowded biological environments \cite{MP24}, dispersion in porous media \cite{F75}, and geophysical flows \cite{B02}, where stochastic simulations are computationally prohibitive. By replicating stochastic behaviors e.g., LLg correlations, the SM enables efficient prediction of scaling laws and universal transport properties, aiding in biomedical engineering, material design, and environmental modeling. This equivalence simplifies the interpretation of experimental data and advances predictive tools for systems where anomalous diffusion governs functionality. The remarkable equivalence of these two dynamics holds for small values of $\beta$. However, as the LLg parameter $\beta$ increases, the equivalence with the SM becomes less convincing because when the $\beta$ increases, the probability distribution of the scatterers yields a greater exponent, resulting in denser scatterers and more pronounced deviations in the system behavior, reducing consistency with the SM. Therefore,
by analytically deriving the scaling of the $3$-point PACF for the SM and numerically estimating their counterparts for the LLg, we established their equivalence across various transport regimes.
In conclusion, our findings show a strong overlap between the SM and LLg dynamics, not only at the level of all position moments but also in the scaling of correlation functions, providing deep insights into anomalous diffusion processes. These findings adds to our theoretical knowledge and call into question existing models, emphasizing the importance of additional research on alternative deterministic and stochastic dynamics. Despite the limitation of our study, which include specific modeling assumptions, the implications for practical applications in domains such as statistical mechanics and biological systems are significant. Future studies might focus on applying these findings to other complex systems to enhance the discussion of anomalous transport. 
Finally, this work emphasizes the need to investigate the interaction between deterministic and stochastic dynamics, and provides a template for translating stochastic correlation features into analytically tractable deterministic dynamics, simplifying experimental data interpretation in crowded cellular environments and porous materials, which will and leading to deeper insights into the behavior of complex systems.
\section*{Acknowledgement}
M.T. gratefully acknowledges the computational
resources provided by HPC@POLITO and, the project
for Academic Computing of the Department of Control and Computer Engineering at the Politecnico di Torino, Italy \cite{HPC_PoliTo}.

\section*{References}
\bibliographystyle{iopart-num}
\bibliography{bibliography}

\end{document}